\def\be{\begin{equation}}
\def\ee{\end{equation}}
\def\bea{\begin{eqnarray}}
\def\eea{\end{eqnarray}}
\newcommand{\beas}{\begin{eqnarray*}}
\newcommand{\eeas}{\end{eqnarray*}}
\newcommand{\nb}{\nonumber}
\newtheorem{theorem}{\sf THEOREM}
\def\Label#1{\label{#1}%
  \smash{\hbox to0pt{\raise1ex\hbox{\tiny[#1]}\hss}}}
\title{Syzygies Probing Scattering Amplitudes}
\author[a,b]{Gang Chen}
\author[c,d]{Junyu Liu}
\author[b]{Ruofei Xie}
\author[b]{Hao Zhang}
\author[c,d]{Yehao Zhou}
\affiliation[a]{Department of Physics, Zhejiang Normal University 688 Yingbin Road, Jinhua 321004, China}
\affiliation[b]{Department of Physics, Nanjing University 22 Hankou Road, Nanjing 210093, China}
\affiliation[c]{School of the Gifted Young, University of Science and Technology of China, Hefei, Anhui 230026, China}
\affiliation[d]{School of Physical Sciences, University of Science and Technology of China, Hefei, Anhui 230026, China}
\emailAdd{gang.chern@gmail.com}
\emailAdd{junyu@mail.ustc.edu.cn}
\emailAdd{rfxie@yangian.com}
\emailAdd{hao.zhang.phy@gmail.com}
\emailAdd{zyh12203@mail.ustc.edu.cn}
\date{\today}
\abstract{We propose a new efficient algorithm to obtain the locally minimal generating set of the syzygies for an ideal, i.e. a generating set whose proper subsets cannot be generating sets. Syzygy is a concept widely used in the current study of scattering amplitudes. This new algorithm  can deal with more syzygies effectively because a new generation of syzygies is obtained in each step and the irreducibility of this generation is also verified in the process. This efficient algorithm can also be applied in getting the syzygies for the modules. We also show a typical example to illustrate the potential application of this method in scattering amplitudes, especially the Integral-By-Part(IBP) relations of the characteristic two-loop diagrams in the Yang-Mills theory.}
\begin{document}
\maketitle
\section{Introduction}
Scattering amplitude is a leading research area which has various applications in phenomenology and attracts attention in a wide range of formal theories \cite{Britto:2005fq,ArkaniHamed:2012nw}. The major objects we focus on in the field of scattering amplitude are polynomial functions and rational functions. In modern mathematics, an efficient tool of dealing with rational functions stems from the theoretical structure of algebraic geometry \cite{math3}. Thus, the study of scattering amplitudes may bring us a fascinating connection between mathematics and physics. Many concepts and methods in algebraic geometry play a significant role in both the calculation and the theoretical analysis in scattering amplitude. One of the most widely used concepts is \emph{syzygy} \cite{math2,math5,math10,math11}, which is the relation set of an m-tuple polynomial function.

Up to now, syzygies have appeared in lots of research topics in scattering amplitude, such as the IBP relation \cite{Gluza:2010ws,Schabinger:2011dz,Zhang:2014xwa,Grozin:2011mt,Kosower:2011ty,Ita:2015tya,Larsen:2015ped}, which is used to determine irreducible loop integrals. It can also be used to fix the ambiguity of the integrands of the non-planar amplitudes. Another application is to simplify the Grassmannian integral form of the $\mathcal{N}=4$ Super Yang-Mills non-planar amplitude \cite{Chen:2014ara,Arkani-Hamed:2014bca,Franco:2015rma,Chen:2015bnt,Benincasa:2015zna,Frassek:2015rka,Bern:2014kca}. Furthermore, syzygies can potentially be used to construct the loop-level scattering amplitudes from unitarity cuts \cite{Du:2014jwa,Bern:1994zx} and to probe the amplitude relations beyond the KK-relation \cite{Kleiss:1988ne} and the BCJ-relation \cite{Bern:2008qj} in Yang-Mills theory.

We usually need a highly efficient algorithm to obtain syzygies in most applications in physics. To give compact expressions for the quantities of interest, such as the IBP relations of the loop integrals, we also need to get the locally minimal generating set for the syzygies. In this paper,we develop an effective algorithm to obtain the irreducible basis of the syzygies of an ideal. Meanwhile, the Gr\"{o}bner basis of the ideal is also obtained. Current algorithms, such as MMT \cite{math13} and F5 \cite{math6}, are very efficient to get the Gr\"{o}bner basis \cite{Gluza:2010ws,Schabinger:2011dz,math4,math6,math7,math8,math9,math10,math11} of syzygies or ideals. However, an algorithm to reduce the number of necessary syzygies is still needed. In \cite{Gluza:2010ws}, Gluza et al also considered reducing a new syzygy obtained by the known syzygies, recursively, in their algorithm. The major advantage of our method is that we deal with more syzygies effectively and efficient feed-back is used to verify the irreducibility of each new syzygy added. This leads to that all the irreducible syzygies can be obtained even before the end of the main loop. This algorithm, with a promising significance in both mathematics and theoretical physics, can also be used in other areas that need to classify irreducible algebraic relations.  The major difference between this algorithm and MMT \cite{math13}  is that we only use the leading terms of polynomials to justify the permitted critical pairs and reduce the generated syzygies in each step. At each step, we can feedback to up-levels to guarantee that our justification is correct. This difference also makes our algorithm faster than MMT in practice.

This paper is organized as following. In Section \ref{sec1} we will first give a warming-up example, and then illustrate our new method for syzygies of ideals in detail with the generalization to module cases. Section \ref{sec2} will focus on the application of this algorithm in the IBP relations of a specific two-loop diagram. In Section \ref{sec3}, we will give some conclusions and remarks on this new algorithm.

\section{A Direct Method To Compute Syzygies}\label{sec1}
A syzygy of a polynomial m-tuple  $\mathbf{f}=(f_1, f_2\cdots f_m)$ is a m-tuple $\mathbf{S}=(a_1,\cdots,a_m)$  such that $\mathbf{S}\cdot\mathbf{f}=0$. The traditional method to get the syzygies of an ideal is based on the Gr\"{o}ebner basis techniques. The locally minimal generating set of the obtained syzygies can only be obtained by the syzygies of the syzygy module. This algorithm is usually not so efficient. Another method is based on the linear algebra techniques \cite{math91} which does not rely on the Gr\"{o}ebner basis. This method does not guarantee the completeness of the syzygies. In this section, we propose a new method to formulate all the syzygies completely. Meanwhile our method guarantees that the generating set  of the syzygy module is locally minimal. The Gr\"{o}ebner basis is obtained automatically in the process.

The general strategy to get the locally minimal generating set of syzygies is to decompose these syzygies by the syzygy of the leading monomials.  This is realized by computing the S-polynomial from a chosen critical pair  step by step. In each step, this generates an S-polynomial $S_{i,j}={lcm(i,j)\over LT(f_i)} f_i-{lcm(i,j)\over LT(f_j)} f_j$ with new leading terms. Finally if there exists a vanishing $S_{i,j}$, we can get the syzygy of the original m-tuple by combining all the two-pairs inversely as shown in Fig. \ref{Fig-TwoPairDec}.
\begin{figure}
  \centering
  \includegraphics[width=12cm]{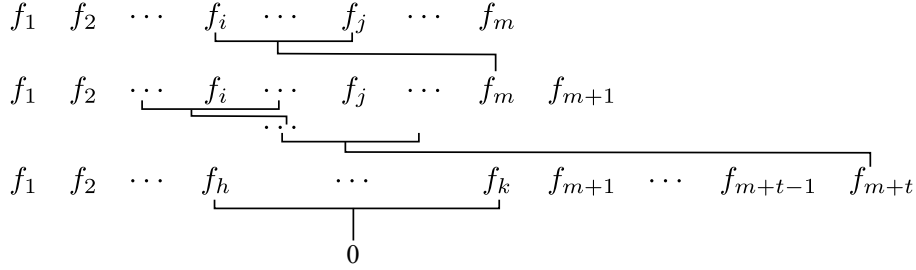}\\
  \caption{Critical pair decomposition.}  \label{Fig-TwoPairDec}
\end{figure}
We give some definitions first:
\begin{itemize}
\item $f$: A polynomial in the polynomial ring $R=K[x_1\cdots x_n]$, where $K$ is an algebraic closed field.
\item $>$: A monomial order on the polynomial ring $R$. In this paper we usually choose the Degree Reverse Lexicographic order. Let $x^\alpha$ and $x^\beta$ be monomials in $R$, where $\alpha(\beta)$ is the exponent vector $\alpha(\beta)\in \mathbb{Z}^n$. We say $x^\alpha>x^\beta$ if $\sum_{i=1}^n\alpha_i>\sum_{i=1}^n\beta_i$ or if $\sum_{i=1}^n\alpha_i=\sum_{i=1}^n\beta_i$, and in the difference $\alpha-\beta$, the rightmost nonzero entry is negative.
\item$\succ$: For a polynomial m-tuple, we define an additional order $\vec{e}_1\succ\vec{e}_2\cdots\succ\vec{e}_m$. The monomial order for this m-tuple is a POT extension of $>$: We say $x^\alpha e_i\succ x^\beta e_j$ if $i<j$ or if $i=j$ and $x^\alpha>x^\beta$
\item $I=\langle f_1\cdots f_m\rangle$: An ideal generated by $f_1 \cdots f_m$ in $R$, where $m\in \mathbb{Z}_{>0}$.
\item $\langle g_1\cdots g_m\rangle$: A Gr\"{o}ebner basis for an ideal in $R$, where $m\in \mathbb{Z}_{>0}$.
\item $LT(f)$: The leading term of $f$ with respect to the order $>$.
\item $LCM(i,j)$: The lowest common multiple of the two polynomial $f_i$ and $f_j$.
\item $lcm(i,j)$: The lowest common multiple of the leading term for two polynomial $f_i$ and $f_j$.
\end{itemize}

\subsection{A Warming-up Example}
Now let us consider an ideal $\langle f_1=xy^3+z, f_2=x^2y^2+z^4,f_3=z^5\rangle$. There are  three principle syzygies $(f_2, -f_1,0), (f_3,0,-f_1),(0,f_3,-f_2)$. A general syzygy has the form of $(f_{c1}, f_{c2}, f_{c3})$. The eliminating between pairs can be done in a decreasing monomial order. The highest critical pair cancelation happens between $\{f_1, f_2\}, \{f_1, f_3\}, \{f_2, f_3\}$.We use $\{f_i,f_j\}$ ($\{i,j\}$ for simplicity) to label the critical pair. For  $\{f_1, f_2\}$, the cancelation between the highest terms leads to $LT(f_{c1})=m_1 x, LT(f_{c2})=m_1 (-y)$, where $m_1$ is an monomial in the polynomial ring $K[x,y,z]$. For the pairs $\{f_1, f_3\}$ and $\{f_2, f_3\}$, the leading terms are $LT(f_{c1})=m_1 z^5, LT(f_{c3})=m_1 (-x y^3)$ and $LT(f_{c2})=m_1 z^5, LT(f_{c2})=m_1 (-x^2 y^2)$. For the last two cases, the leading terms of the syzygies can be simplified by the principle syzygy $(f_3,0,-f_1),(0,f_3,-f_2)$. Hence such pair  $\{f_1, f_3\}$ and $\{f_2, f_3\}$ can be reduced by other syzygies. We will call a syzygy's leading monomial of the first term a barrier. A new syzygy can be added to the syzygy set if and only if its product factor can not be divided by any existing barrier.

We denote the syzygies and the barriers as
$$
\bordermatrix{
\text{index}&\text{Effect}&f_1&f_2&f_3\cr
\mathbf{S}_1&\text{T}&\underline{x^2 y^2}+z^4 & -x y^3-z & 0 \cr
\mathbf{S}_2&\text{T}& \underline{z^5} & 0 & -x y^3-z \cr
\mathbf{S}_3&\text{T} &0 & \underline{z^5} & -x^2 y^2-z^4  \cr
}
$$
where the barriers are underlined. We use $\text{T}$ to mark an irreducible syzygy for the original m-tuple $f_i$ at this step. The syzygy set of m-tuple $f_i$ denoted as $\mathcal{S}=\{\mathbf{S}_1,\mathbf{S}_2,\mathbf{S}_3\}$ is called the Top-syzygy set .
Now we can proceed to the next step. Following the rules described above, the only allowed pairs are those with product factors that can not be divided by the barriers. Here only the pair $\{f_1, f_2\}$ with the coefficients $(x, -y)$ is allowed.

After processing this pair, the ideal can be written as
$$\langle xy^3+z, x^2y^2+z^4,z^5, f_4=-yz^4+xz\rangle.$$
and there is one more syzygy $\mathbf{S}_{1,2}=(x,-y,0,-1)$ to be added to complete the syzygy set. This extra relation is an inheritance of the cancelation that takes place in the last step. Then the syzygies become:
\begin{figure}
  \centering
  \includegraphics[width=18cm]{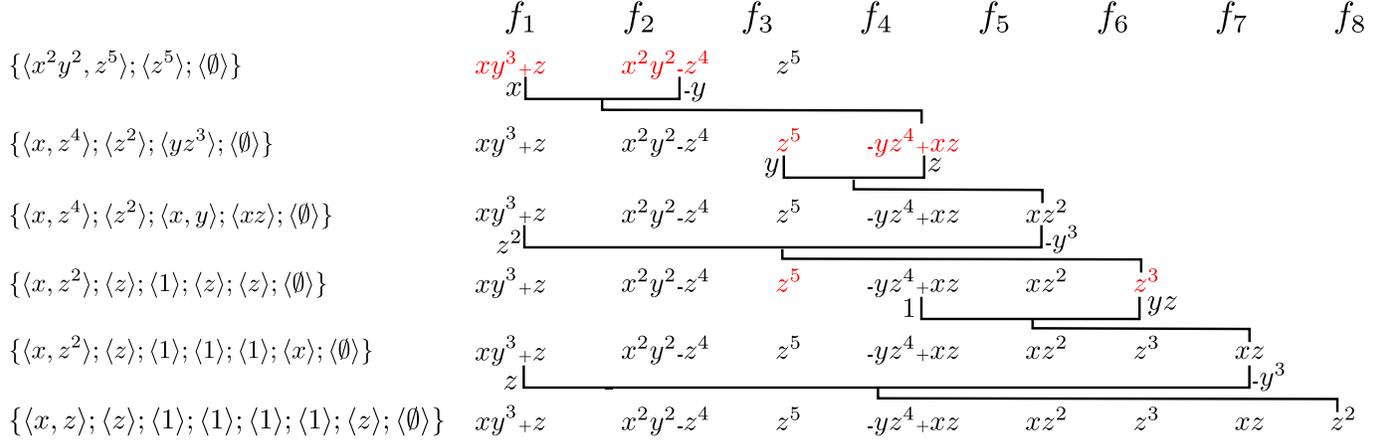}\\
  \caption{Illustrations on the warming-up example.}\label{example}
\end{figure}
$$
\bordermatrix{
\text{Index}&\text{Effect}&f_1&f_2&f_3&f_4\cr
\mathbf{S}_1&\text{T}& \underline{x^2 y^2}+z^4 & -x y^3-z & 0 & 0 \cr
\mathbf{S}_2&\text{T}& \underline{z^5} & 0 & -x y^3-z & 0 \cr
\mathbf{S}_3&\text{T}& 0 & \underline{z^5} & -x^2 y^2-z^4 & 0 \cr
\mathbf{S}_{1,2}&&\underline{x} & -y & 0 & -1 \cr
}
$$

At this moment, we should make sure that the barriers can't be divide by each other by performing linear transformations. First, add the product of $\mathbf{S}_{1,2}$ and $-xy^2$ to $\mathbf{S}_1$ ($\mathbf{S}_{1,2}$ and $\mathbf{S}_1$ represent the fourth and the first syzygy in the syzygy matrix, i.e. the fourth and the first row in the syzygy matrix, respectively). Now we have
$$
\left(
\begin{array}{cccc}
 \underline{z^4} & -z & 0 & xy^2 \\
 \underline{z^5} & 0 & -x y^3-z & 0 \\
 0 & \underline{z^5} & -x^2 y^2-z^4 & 0 \\
 \underline{x} & -y & 0 & -1 \\
\end{array}
\right).
$$
Next, we see that the first term in $\mathbf{S}_2$ can be divided by the first term in $\mathbf{S}_1$. So we multiply $\mathbf{S}_1$ with $-z$ and add it to $\mathbf{S}_2$; and the syzygy matrix becomes
$$
\left(
\begin{array}{cccc}
 \underline{z^4} & -z & 0 & xy^2 \\
 0 & \underline{z^2} & -x y^3-z & -xy^2z \\
 0 & \underline{z^5} & -x^2 y^2-z^4 & 0 \\
 \underline{x} & -y & 0 & -1 \\
\end{array}
\right).
$$
Likewise, we further simplify $\mathbf{S}_3$ using $\mathbf{S}_2$. The matrix then reads,
$$
\left(
\begin{array}{cccc}
 \underline{z^4} & -z & 0 & xy^2 \\
 0 & \underline{z^2} & -x y^3-z & -xy^2z \\
 0 & 0 & \underline{xy^3z^3}-x^2 y^2 & xy^2z^4 \\
 \underline{x} & -y & 0 & -1 \\
\end{array}
\right).
$$

Now there are three new two-pair syzygies, i.e. those containing the new polynomial $-yz^4+xz$. Again by the rules above, the only permitted pair is $\{f_3, f_4\}$. Here the relation is $(0,0,f_4/z, -z^4)$. From these two-pair relations the syzygies with barriers are updated as
$$
\left(
\begin{array}{cccc}
 \underline{z^4} & -z & 0 & xy^2 \\
 0 & \underline{z^2} & -x y^3-z & -xy^2z \\
 0 & 0 & \underline{xy^3z^3}-x^2 y^2 & xy^2z^4 \\
 \underline{x} & -y & 0 & -1 \\
 0 & 0 & \underline{-y z^3}+x & -z^4 \\
\end{array}
\right).
$$
 We observe that $\mathbf{S}_3$ can be further reduced by $\mathbf{S}_4$ to zero. Thus $\mathbf{S}_3$ should be removed from the syzygy matrix and the Top-syzygy set. The syzygy matrix becomes
$$
\bordermatrix{
\text{Index}&\text{Effect}&f_1&f_2&f_3&f_4\cr
\mathbf{S}_1&\text{T}& \underline{z^4} & -z & 0 & xy^2 \cr
\mathbf{S}_2& \text{T }&0 & \underline{z^2} & -x y^3-z & -xy^2z \cr
  \mathbf{S}_{1,2} & & \underline{x} & -y & 0 & -1 \cr
\mathbf{S}_4&\text{T}& 0 & 0 & \underline{-y z^3}+x & -z^4 \cr
}
$$
and the Top-syzygy set becomes $\mathcal{S}=\{\mathbf{S}_1, \mathbf{S}_2, \mathbf{S}_4\}$.

We can continue to add a new polynomial to the ideal, namely the polynomial generated by the pair $\{f_3, f_4\}$ with the coefficients $(y, z)$, and the ideal is updated to
$$\langle xy^3+z, x^2y^2+z^4,z^5, -yz^4+xz,f_5=xz^2\rangle.$$
We need a new syzygy that generates $f_5$ and the non-principle syzygy $\{f_4, f_5\}$. The updated syzygy matrix becomes:
$$
\bordermatrix{
\text{Index}&\text{Effect}&f_1&f_2&f_3&f_4&f_5\cr
\mathbf{S}_1&\text{T}& \underline{z^4} & -z & 0 & xy^2 &0\cr
\mathbf{S}_2& \text{T}&0 & \underline{z^2} & -z & 0 & -x y^2 \cr
\mathbf{S}_{1,2}&& \underline{x} & -y & 0 & -1 &0\cr
\mathbf{S}_4&\text{T}& 0 & 0 & \underline{x} & 0 & -z^3 \cr
\mathbf{S}_{3,4}&& 0 & 0 & \underline{y} & z&-1 \cr
\mathbf{S}_5&\text{F}& 0 & 0 &0& \underline{x z} & y z^3-x \cr
}.
$$
Here the new rank-2 syzygy $\mathbf{S}_5$  denoted by $F$ can generate a new barrier which can not be reduced to zero. However if we keep track of it back to the level above, we find this syzygy can be reduced to zero. Hence it can not contribute a new irreducible syzygy to the Top-syzygy set $\mathcal{S}$. To keep the Top-syzygies irreducibility, we introduce a rule that if a T-type syzygy is reduced to zero only by the F-type syzygies, we still keep this T-type syzygy in the Top-syzygy set $\mathcal{S}$.

The next permitted pair is $\{f_1, f_5\}$ with the coefficients $(z^2, -y^3)$, and it generates a new polynomial $f_6=z^3$. The syzygy $\mathbf{S}_{1,5}$, together with non-principle syzygy $\mathbf{S}_6$--$\{f_3, f_6\}$ with the coefficients $(1, -z^2)$ and $\mathbf{S}_7$--$\{f_5, f_6\}$ with the coefficients $(z, -x)$, update the syzygy matrix.  The syzygy $\mathbf{S}_6$ is T-type while $\mathbf{S}_7$ is F-type. Thus we can add $\mathbf{S}_6$ to the Top-syzygy set. After rewriting, the syzygy matrix refreshes to
$$
\bordermatrix{
\text{Index}&\text{Effect}&f_1&f_2&f_3&f_4&f_5&f_6\cr
\mathbf{S}_1&\text{T}& 0 & \underline{-z} & 0 & xy^2 & 0 & z^2+xy^3z \cr
\mathbf{S}_{1,2}& & \underline{x} & -y & 0 & -1 & 0 & 0 \cr
\mathbf{S}_{3,4}& & 0 & 0 & 0 &  \underline{z} &-1 & yz^2 \cr
 \mathbf{S}_{1,5}&&  \underline{z^2} & 0 & 0 & 0 & -y^3 & -1 \cr
\mathbf{S}_6&\text{T}& 0 & 0 &  \underline{1} & 0 & 0 & -z^2 \cr
\mathbf{S}_4&\text{T}& 0 & 0 & 0 & 0 &  \underline{z} & -x \cr
}.
$$
$\mathbf{S}_2$ is reduced to zero by $\mathbf{S}_6$ and some other non F-type syzygies. Hence it should be removed. $\mathbf{S}_4$ is reduced to zero only by F-type syzygies. This indicates that $\mathbf{S}_4$ and $\mathbf{S}_7$ is equivalent. We replace the elements in $\mathbf{S}_4$ by those in $\mathbf{S}_7$ and leave the index untouched. Hence the Top-syzygy set is $\mathcal{S}=\{\mathbf{S}_1, \mathbf{S}_4, \mathbf{S}_6\}$ at this step.

The next permitted pair is $\{f_4, f_6\}$ with the coefficients $(1, yz)$, and it generates a new polynomial $f_7 = xz$. After rewriting, the syzygy matrix refreshes to
$$
\bordermatrix{
\text{Index}&\text{Effect}&f_1&f_2&f_3&f_4&f_5&f_6& f_7\cr
\mathbf{S}_1&\text{T}& 0 & \underline{-z} & 0 & 0 & 0 & z^2 &xy^2 \cr
\mathbf{S}_{1,2}& & \underline{x} & -y & 0 & 0 & 0 & yz & -1 \cr
\mathbf{S}_{3,4}&& 0 & 0 & 0 &  0 &  \underline{-1} & 0 & z \cr
 \mathbf{S}_{1,5}&&  \underline{z^2} & 0 & 0 & 0 & 0 & -1&-y^3z \cr
\mathbf{S}_6&\text{T}& 0 & 0 &  \underline{1} & 0 & 0 & -z^2 & 0 \cr
\mathbf{S}_4&\text{T}& 0 & 0 & 0 & 0 &  0 &  \underline{x} & -z^2 \cr
\mathbf{S}_{4,6}& & 0 & 0 & 0 &  \underline{1} & 0 & yz & -1 \cr
}
$$
The next permitted pair is $\{f_1, f_7\}$ with the coefficients $(z, -y^3)$, and it generates a new polynomial $f_8 = z^2$. Taken into consideration the non-principle syzygy $\{f_7, f_8\}$ with the coefficients $(z, -x)$, the syzygy matrix can be reduced as
$$
\bordermatrix{
\text{Index}&\text{Effect}&f_1&f_2&f_3&f_4&f_5&f_6& f_7&f_8\cr
\mathbf{S}_1&\text{T}& 0 & \underline{-z} & 0 &0 & 0 & 0&xy^2 &z^3\cr
\mathbf{S}_{1,2}& & \underline{x} & -y & 0 & 0 & 0 & 0 & -1&yz^2 \cr
\mathbf{S}_{3,4}&& 0 & 0 & 0 &  0 &  \underline{-1} & 0 & z &0\cr
 \mathbf{S}_{1,5}&& 0& 0 & 0 & 0 &0 &\underline{-1}&0&z \cr
\mathbf{S}_6&\text{T}& 0 & 0 &  \underline{1} & 0 & 0 & 0 & 0 &-z^3\cr
\mathbf{S}_{4,6}& & 0 & 0 & 0 &  \underline{1} & 0 & 0 & -1&yz^2 \cr
\mathbf{S}_{1,7}& & \underline{z} & 0 & 0 & 0 & 0 & 0 & -y^3 & -1 \cr
\mathbf{S}_{4}&\text{T}& 0 & 0 & 0 & 0 &  0 &  0 & \underline{z} & -x \cr
}.
$$

There is no more permitted pair, so the program halts. The final ideal is
$$\langle xy^3+z, x^2y^2+z^4,z^5, -yz^4+xz,f_5=xz^2,f_6=z^3,f_7=xz,f_8=z^2  \rangle.$$
In processing the last two critical pair, the Top-syzygy set is invariant. Now we obtain the syzygy $\mathcal{S}=\{\mathbf{S}_1,\mathbf{S}_4,\mathbf{S}_6\}$ of the original 3-tuple ideal generators,
$$
\left(
\begin{array}{ccc}
 z^4+x^2 y^2 & -x y^3-z & 0 \\
 x z^4 & -y z^4 & y z^3-x \\
 x y^3 z^3-z^4 & -y^4 z^3 &y^4 z^2+1
\end{array}
\right).
$$
The simplest form of the syzygies is obtained by pairing the elements in red in Fig. \ref{example}. The syzygy module obtained here is the same as that obtained by \textit{{\sc Singular}} \cite{Singular}. But the number of the generators of the module is smaller than that in \textit{{\sc Singular}}. According to Fig. \ref{example} and the analysis in \cite{math13}, the polynomials in the final step whose barriers are not $1$ just form the Gr\"{o}bner basis of the ideal.

\subsection{Syzygies Of General Ideals}
In this section we give a general description of our method. We arrange the generators of the ideal in an order $LT(f_i)<LT(f_j)$ for $i<j$. We also assume all the monomials in $f_j$ can not be divided by the $LT(f_i)$ for all $i<j$. This is easy to be realized by dividing each $f_j$ by all $f_i$ with $i<j$, which is called Top-reduction. Our method is  to generate the syzygies inductively from a series of critical pairs, as shown in Fig. \ref{Fig-first}.
\begin{figure}
  \centering
  \includegraphics[width=12cm]{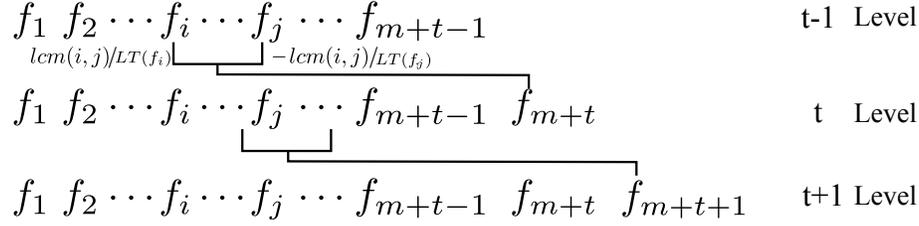}\\
  \caption{Performing syzygies by a series of critical pairs.}  \label{Fig-first}
\end{figure}
 We divide the generators into two levels, namely, before and after dealing with the critical pair, as shown in the first two rows in Fig. \ref{Fig-first}. In the up-row, there are already some obvious syzygies. If a pair induces a syzygy which can be simplified by the already known syzygies, then this pair is not permitted. This condition can be used to set up some barriers for the critical pairs. Among the allowed pairs, we prefer to deal with the one that generates the S-polynomial with the lowest leading terms in the order $>$. To summarize, in each step, we need to implement the following procedures
\begin{itemize}
\item Get all the trivial syzygies
\item Reduce the syzygies
\item Build up the barriers
\item Find a critical pair and generate an S-polynomial
\item Top-reduction on the S-polynomial
\end{itemize}

\paragraph{Get all the trivial syzygies} Now we set up the known syzygies on level $t$ as shown in Fig. \ref{Fig-first}. There are four kinds of syzygies on level $t$ which can be directly read off
\begin{itemize}
\item Syzygies inherited from the syzygies on level $t-1$
\item Syzygy induced by the critical pair on level $t-1$,  $(\cdots { lcm_{i,j}\over LT(f_j)}\cdots{lcm_{i,j}\over LT(f_i)}\cdots0, -1)$. We denote such syzygies as $\mathbf{S}_{i,j}$
\item Principle syzygies $\mathbf{S}^p$ with $f_{m+t}$
\item Non-principle rank-2 syzygies with $f_{m+t}$. On level $t$,  they are just \\
$(0,\cdots, {f_{m+t}\over c_{i,m+t}},\cdots, {-f_i\over c_{i,m+t} } )$, where  $c_{i,m+t}$ is the maximal common factor of $f_i$ and $f_{m+t}$. We use $\mathbf{S}^e_{i,m+t}$ to denote such a syzygy
\end{itemize}
For the last kind of syzygies, they are permitted only when they are not blocked by the former syzygies. For setting up the barriers, the principle syzygies with $f_{m+t}$ are not necessary since they always generate some abandoned syzygies. To see this, we consider the principle syzygy between $f_1$ and $f_{m+t}$ without loss of generality 
$$\mathbf{S}^p_{1,m+t}=\left(f_{m+t},\cdots, 0_i,\cdots, 0_j,\cdots, -f_1\right).$$
 The syzygies inherited from the level $(t-1)$ contain 
 $$\mathbf{S}^p_{1,i}=\left(f_i,\cdots, -f_1,\cdots, 0_j,\cdots, 0_{m+r}\right)$$ and $$\mathbf{S}^p_{1,j}=\left(f_j,\cdots, 0_i,\cdots, -f_1,\cdots, 0_{m+r}\right).$$ According to the syzygy induced by the critical pair $$\mathbf{S}_{i,j}=(0_1,\cdots ,{ lcm_{i,j}\over LT(f_j)},\cdots, -{lcm_{i,j}\over LT(f_i)}, \cdots, -1),$$
  it is easy to see that $f_{m+t}={ lcm_{i,j}\over LT(f_j)} f_i- {lcm_{i,j}\over LT(f_i)} f_j$ and  $$\mathbf{S}^p_{1,m+t}={lcm_{i,j}\over LT(f_j)}\mathbf{S}^p_{1,i}- {lcm_{i,j}\over LT(f_i)}\mathbf{S}^p_{1,j}+f_1 \mathbf{S}_{i,j}.$$ Back to the level $(t-1)$, $\mathbf{S}_{i,j}$ is vanishing and the corresponding syzygy of   $\mathbf{S}^p_{1,m+t}$ on level $(t-1)$ is $$\mathbf{S}^{p,t-1}_{1,m+t}={lcm_{i,j}\over LT(f_j)}\mathbf{S}^p_{1,i}-{lcm_{i,j}\over LT(f_i)}\mathbf{S}^p_{1,j}.$$  Hence the syzygy $\mathbf{S}^p_{1,m+t}$ is reduced to zero directly on level $t-1$ by $\mathbf{S}^p_{1,i}$ and $\mathbf{S}^p_{1,j}$. Obviously $\mathbf{S}^p_{1,m+t}$ can be deleted without affecting the completeness of the syzygy module.  Furthermore, for the justification of irreducibility, if a syzygy  $\mathbf{S}$ is reduced to zero by $\mathbf{S}^p_{1,m+t}$ on level $t$, it is  also reduced  to  zero by $\mathbf{S}^p_{1,i}$ and $\mathbf{S}^p_{1,j}$ on level $(t-1)$. Hence we do not include any principle syzygy with $f_{m+t}$.

\paragraph{Syzygies, Rewriting and Barriers}
The known syzygies can usually be reduced with each other and this process provides more barriers for upcoming critical pairs. We first reduce each syzygy using $\mathbf{S}_{i,j}$. Then we set up the initial barrier $\langle B^0_{f_i}\rangle$ for each $f_i$. If an pair is blocked by in $\langle B^0_{f_i}\rangle$, then the pair multiplying an monomial is also blocked. The barrier $\langle B^0_{f_i}\rangle$ is an ideal in the monomial ring.  For each syzygy other than $\mathbf{S}_{i,j}$, we select the leading term of the first non-vanishing polynomial in the syzygy in order  $\vec{e}_1\succ\vec{e}_2\succ\cdots\succ\vec{e}_{m+t}$. Then we add this leading term into the generating set for $\langle B^0_{f_i}\rangle$. For each critical pair, it induces a syzygy $\mathbf{S}_{i,j}$ which forbids a particular critical pair $\{i, j\}$ for upcoming syzygies. It also adds a new generator  ${LCM(i,j)\over LT(f_i)}$ to the ideal $\langle B^0_{f_i}\rangle$ if $\mathbf{S}_{i,j}[k]=0$ for all $k<i$. We use $\mathbf{S}_{i,j}[k]$ to denote the k-column value of $\mathbf{S}_{i,j}$.

For the syzygies which can create  generators for $\langle B^0_{f_i}\rangle$, they usually can be reduced using each other. To see how this happens, we choose two syzygies $\mathbf{S}_1, \mathbf{S}_2$ which create the generators $m_1, m_2$ for $\langle B^0_{f_i}\rangle$. Then the form of the two syzygies are 
\begin{eqnarray*}
\mathbf{S}_1&=&\{0_1,\cdots,0_{i-1}, m_1+h_1, \cdots\}\\
\mathbf{S}_2&=&\{0_1,\cdots,0_{i-1}, m_2+h_2, \cdots\},
\end{eqnarray*}
where $m_1, m_2$ are monomials and $h_1,h_2$ are polynomials. 
If $m_1=t \times m_2$, where $t$ is an element in the monomial ring, then $\mathbf{S}_1$ can be reduced by $\mathbf{S}_2$ as  $$\mathbf{S}'_1=\mathbf{S}_1-t \mathbf{S}_2=\{0_1,\cdots,0_{i-1}, h_1- t h_2, \cdots\}.$$ The syzygy $\mathbf{S}'_1$ will create a generator $m_3=LT(h_1- t h_2)$ for  the barrier ideal of $f_i$. Then the barrier ideal is $\langle B^1_{f_i}\rangle=\langle m_3, m_1, m_2,\cdots\rangle=\langle m_3, m_2,\cdots\rangle$, where $\cdots$ denote the other generators  of $\langle B^0_{f_i}\rangle$ from other syzygies. The involution of the barrier ideal under rewriting have $\langle B^0_{f_i}\rangle \subseteq \langle B^1_{f_i}\rangle$.  When  $\langle B^0_{f_i}\rangle= \langle B^1_{f_i}\rangle$, this indicates that $\mathbf{S}'_1$ can also be reduced by other syzygies. The rewriting will stop only when $\langle B^0_{f_i}\rangle \subset \langle B^1_{f_i}\rangle$ or $m_3$ is reduced to zero. If $m_3$ is reduced to zero, the final $\mathbf{S}'_1$ after several rewriting steps will add a generator to the barrier ideal $\langle B^0_{f_{i+1}}\rangle$. The generator-creating rewriting process of syzygies can be done recursively. Hence we conclude that the rewriting can enlarge the barrier ideal or leave the ideal unchanged. And after the steps of rewriting for all columns $f_{i}$, we get the maximal barrier ideal $\langle B_{f_{i}}\rangle$ for each $f_{i}$. 

If there are some $\mathbf{S}_{i,j}[k]\neq 0$ for $k<i$, this kind of syzygies do not get reduced in the steps above.  They only forbid a particular critical pair $\{i, j\}$ to  upcoming syzygies. To reduce them, we can use the syzygy $\mathbf{S}^e$ when $\mathbf{S}[i]$ can be divided by $LT(\mathbf{S}^e[i])$ and $\mathbf{S}^e[k<i]=0$, or when $\mathbf{S}[j]$ can be divided by $LT(\mathbf{S}^e[j])$ and $\mathbf{S}^e[k<j]=0$. In either case, the critical pair $\{i, j\}$ has been blocked by the barrier ideal generator $LT(\mathbf{S}^e[i])$ or $LT(\mathbf{S}^e[j])$.  After rewriting, $\mathbf{S}_{i,j}$ becomes  $\mathbf{S}'_{i,j}=\mathbf{S}_{i,j}-m_1 \mathbf{S}^e$, where $m_1={\mathbf{S}[i]\over LT(\mathbf{S}^e[i])}, \text{or} {\mathbf{S}[j]\over LT(\mathbf{S}^e[j])}$. Furthermore $\mathbf{S}'_{i,j}$ does not contain the critical pair. $\mathbf{S}'_{i,j}$ also creates a generator for the barrier ideal $\langle B_{f_{i}}\rangle$ and return to the rewriting steps for the involution of barrier ideals. 

 \paragraph{Irreducibility among critical pairs}
 In addition to the rewriting of the syzygies, to maintain the irreducibility of the syzygies, it needs to be guaranteed that the critical pairs are irreducible among themselves. To justify the irreducibility of the critical pairs, only the leading terms of said pairs are relevant. The irreducibility is easy to be observed from the cell complex for the monomial ideas \cite{math12}. 
The cell complex can be set up on the present level $t$ or on level $(t-1)$ in advance.  As an example, we consider three leading terms $\{x^2y,xz^2, yz^3\}$ on level $t$. We first define the vertices of the simplex $\bigtriangleup$ to be the monomial elements as shown in Fig. \ref{Fig-TriRelation}.  
 \begin{figure}
  \centering
  \includegraphics[width=3cm]{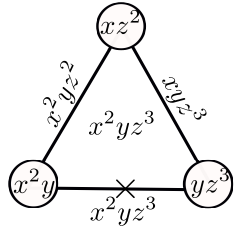}\\
  \caption{Example for Cell complex.}  \label{Fig-TriRelation}
\end{figure}
The edge of $\bigtriangleup$ is labelled by the least common multiple of the monomial element. And the face of $\bigtriangleup$ which the three edges form is labelled by the least common multiple of the labels of the edges. The edges $e_{12}, e_{23}, e_{13}$ imply three syzygies $(z^2, -xy,0)$,  $(0, yz,x)$,  $(z^3, 0,x^2)$. The least common multiple of $e_{13}$ is equal to the least common multiple of the face. Hence the syzygy corresponding to $e_{13}$ is reducible.  This rule holds generally and is very convenient for determining whether the syzygies generated by the critical pairs are reducible. In fact if the edges form a face, and the least common multiple of the face is equal to the least common multiple of a edge, the syzygy corresponding to the edge is reducible. 
 
The cell complex on level $(t-1)$ is shown in the following example. We set that the ideal generators include $f_{i}=x_4, f_{j}=x_2x_5x_6, f_{k}=x_1x_2x_3+x_1$ on level $(t-1)$. Then we perform the critical pair for $\{f_j, f_k\}$. A new generator $f_{k+1}=-x_1x_5x_6$ of the ideal is obtained on level $t$ as shown in  Fig. \ref{SAF}. 
 \begin{figure}
  \centering
  \includegraphics[width=12cm]{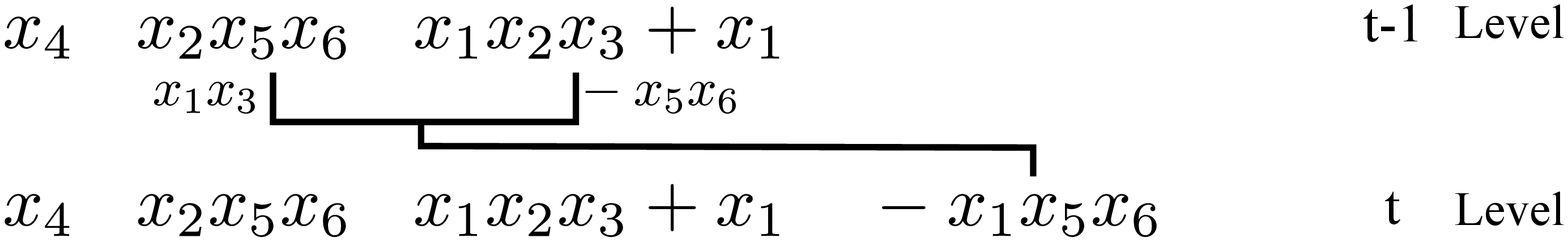}\\
  \caption{Performing a critical pair.}  \label{SAF}
\end{figure}
On level $(t-1)$, according to the cell complex, the three edges are denoted as $e_{ij}, e_{ik}, e_{jk}$. These edges are characterized by the least common multiples  $x_2x_4x_5x_6, x_1x_2x_3x_4, x_1x_2x_3x_5x_6$ for leading terms of $(f_i, f_j), (f_i, f_k), (f_j, f_k)$ respectively. The face $s_{ijk}$ formed by the three edges is characterized by the least common multiple $m_s=x_1x_2x_3x_4x_5x_6$ for  the leading terms of $(f_i, f_j,f_k)$.  None of the edge common multiple is equal to $m_s$. Hence the three critical pairs are irreducible among each other.  On level $t$, according to the cell complex, the irreducible edges are shown in Fig. \ref{SAS}. 
\begin{figure}
  \centering
  \includegraphics[width=4cm]{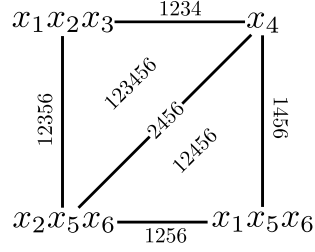}\\
  \caption{Cell Complex for the critical pairs, the symbol such as $1234$ denote the monomial $x_1x_2x_3x_4$ for convenience.}  \label{SAS}
\end{figure}
Now we discuss the critical pair for edge $e_{i,k+1}$. The syzygy corresponding to $e_{i,k+1}$ is $\mathbf{S}_{i,k+1}=\{x_1x_5x_6,0,0,x_4\}$ on level $t$. On level $(t-1)$, $\mathbf{S}_{i,k+1}$ become $\mathbf{S}'_{i,k+1}=\{x_1x_5x_6, x_1x_3x_4,-x_4x_5x_6\}$. This syzygy is the product of $x_4$ and the syzygy corresponding to $ e_{jk}$ $$x_4 (x_1 x_3  f_j - x_5 x_6  f_k)=-x_1 x_5 x_6 f_i. $$ This relation is reducible as the product of $x_4$ and the least common multiple $x_1x_2x_3x_5x_6$ for the edge $ e_{jk}$ divide the $m_s$ for the face $s_{ijk}$. Hence according to the cell complex, such syzygy can be reduced to 
\begin{eqnarray*}
\mathbf{S}_{i,j}&=&\{x_2x_5x_6,-x_4,0\}\\
\mathbf{S}_{i,k}&=&\{x_1x_2x_3+x_1,0,-x_4\}.
\end{eqnarray*}
  This is also direct to check $$\mathbf{S}'_{i,k+1}+{x_1x_3} \mathbf{S}_{i,j} -{x_5 x_6} \mathbf{S}_{i,k}=0.$$ Hence, $\mathbf{S}_{i,k+1}$ is  reducible on $\mathbf{S}_{i,j}$, $\mathbf{S}_{i,k}$, $\mathbf{S}_{j,k}$. We  just delete $\mathbf{S}_{i,k+1}$. In practice, the rules for forbidding such reducible critical pairs are summarized  as following: When performing a critical pair $\{i,j\}$ on level $(t-1)$ and generating new ideal generator $f_{m+i+1}$, choose generator $f_k$ on level $(t-1)$, such that $k\neq i, k\neq j$ and the corresponding S-polynomial $S_{k,i}, S_{k,j}$ can be reduced  to zero, then $\mathbf{S}_{m+t+1,k}$ is reducible by the syzygy module on level $(t-1)$ when ${lcm(LT(f_{m+t+1}), LT(f_k))\over LT(f_{m+t+1})}lcm(LT(f_i),LT(f_j))$ divide $lcm(LT(f_i),LT(f_j),LT(f_k))$. The proof of this statement is direct by the cell complex as shown in the above example.

\paragraph{Get the permitted critical pair}
After updating the  barriers, it is easy to observe which critical pair is permitted. If the monomials from a critical pair belong to $\langle B_{f_i}\rangle$, such pair is not permitted. And all the used critical pairs are not permitted in the following steps. Among all the allowed critical pairs, we choose a pair such that the leading term of S-polynomial is the lowest in the $>$ order. When performing a critical pair, a new polynomial is generated. In the following, we always assume a full top-reduction is performed on new polynomials.

\paragraph{Top-reduction}   
\begin{theorem}
For an ideal  $\langle f_1,f_2\cdots f_i\cdots f_j\cdots f_m\rangle$, if a critical pair $\{i,j\}$ generates a polynomial  $f_{m+1}$ with leading term being able to be divided by the leading terms of the  generators $f_s$, then the  $f_{m+1}$ is reduced to a polynomial $f'_{m+1}$ with the lower order leading term. All the irreducible syzygies will not refer to $f_{m+1}$, which means that we can replace $f_{m+1}$ with $f'_{m+1}$.
\end{theorem}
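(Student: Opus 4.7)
The plan is to establish the two assertions of the theorem in turn, using as the main tool the tautological identity produced by full top-reduction. Since $LT(f_s)$ divides $LT(f_{m+1})$, a single reduction step $f_{m+1}\mapsto f_{m+1}-(c_{m+1}/c_s)\,m_s f_s$ with $m_s=LT(f_{m+1})/LT(f_s)$ cancels the leading term by construction. Full top-reduction iterates this until no further cancellation is possible; termination is immediate because $>$ is a well-ordering on monomials, and bookkeeping the successive cancellations yields the identity
\begin{equation}
f_{m+1}=f'_{m+1}+\sum_{s=1}^{m}p_{s}f_{s},\qquad p_{s}\in R,\label{eq:topred}
\end{equation}
which in particular exhibits $LT(f'_{m+1})<LT(f_{m+1})$, giving the first assertion.

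For the second assertion, I would use \eqref{eq:topred} to build an $R$-linear map $\Phi$ between the syzygy module of $\mathbf{f}=(f_{1},\ldots,f_{m},f_{m+1})$ and that of $\mathbf{f}'=(f_{1},\ldots,f_{m},f'_{m+1})$, defined by
\begin{equation}
\Phi(a_{1},\ldots,a_{m},a_{m+1})=(a_{1}+a_{m+1}p_{1},\ldots,a_{m}+a_{m+1}p_{m},\,a_{m+1}),
\end{equation}
with inverse obtained by flipping the sign of the $p_{s}$-shift. A one-line substitution of \eqref{eq:topred} into the dot product gives $\mathbf{S}\cdot\mathbf{f}=\Phi(\mathbf{S})\cdot\mathbf{f}'$ as polynomials, so $\Phi$ restricts to an $R$-module isomorphism of the two syzygy modules. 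Consequently $\Phi$ sends any locally minimal generating set of the syzygies of $\mathbf{f}$ to a locally minimal generating set of the syzygies of $\mathbf{f}'$, and the latter set literally does not refer to $f_{m+1}$ because $f_{m+1}$ is no longer a component of $\mathbf{f}'$. This is exactly the statement that no irreducible syzygy refers to $f_{m+1}$, and it justifies overwriting $f_{m+1}$ by $f'_{m+1}$ in the algorithm.

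The main obstacle I expect is checking that the replacement is compatible with the leading-term bookkeeping of the preceding subsections, namely the barrier ideals $\langle B_{f_{i}}\rangle$ and the cell-complex irreducibility criterion. The map $\Phi$ shifts the $i\le m$ components of a syzygy by multiples of its $(m+1)$-component, which could in principle alter the POT leading slot of individual syzygies. I would resolve this by running the induction on the single cancellation steps comprising \eqref{eq:topred} rather than on the composite identity: at each single step the leading term in each slot $i\le m$ is untouched (the $\Phi$-shift is a strictly sub-leading perturbation within those slots, since it is controlled by $LT(f'_{m+1})<LT(f_{m+1})$), so the generators of each $\langle B_{f_{i}}\rangle$ produced from the updated syzygies agree with those transported along $\Phi^{-1}$ from the original ones. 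This bookkeeping is essentially tautological once the induction is set up correctly, but it is the one place where care is required to ensure that the subsequent critical-pair and barrier steps of the main algorithm are unaffected by the overwriting.
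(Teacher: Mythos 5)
Your argument takes a genuinely different route from the paper's. For the second assertion you construct the explicit $R$-module isomorphism $\Phi$ between the syzygy module of $\mathbf{f}=(f_1,\ldots,f_m,f_{m+1})$ and that of $\mathbf{f}'=(f_1,\ldots,f_m,f'_{m+1})$, furnished by the substitution $f_{m+1}=f'_{m+1}+\sum_s p_s f_s$, and read off the conclusion from the fact that an isomorphism carries locally minimal generating sets to locally minimal generating sets. The paper instead argues combinatorially via the cell complex on leading terms: since $LT(f_m)\mid LT(f_{m+1})$, the least common multiple of the face on $\{LT(f_i),LT(f_m),LT(f_{m+1})\}$ equals that of the edge $e_{i,m+1}$, so the Taylor-complex criterion shows every critical pair $\{i,m+1\}$ is reducible to $\{i,m\}$ and $\{m,m+1\}$; since the algorithm decomposes every syzygy through critical pairs, no irreducible syzygy need reference $f_{m+1}$. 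Your version is cleaner as a standalone module statement; the paper's version speaks the leading-term/critical-pair language the algorithm actually tracks.

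That difference matters exactly at the point you flag as the ``main obstacle,'' and your resolution there has a gap. You claim that in a single cancellation step by $f_m$ ``the leading term in each slot $i\le m$ is untouched,'' with the justification that the perturbation is controlled by $LT(f'_{m+1})<LT(f_{m+1})$. But the perturbation in slot $m$ is $a_m\mapsto a_m+a_{m+1}q$, where $q\,LT(f_m)$ has the same degree as $LT(f_{m+1})$; its size is governed by $a_{m+1}$, not by the drop in the new generator's leading term. Concretely, the principal syzygy $(0,\ldots,0,f_{m+1},-f_m)$ has POT leading slot $m$ with leading term $LT(f_{m+1})$, and $\Phi$ sends it to $(0,\ldots,0,f'_{m+1},-f_m)$, so the slot-$m$ leading term genuinely drops to $LT(f'_{m+1})$; if $f'_{m+1}=0$ the leading slot moves to $m+1$ altogether. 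This does not break the isomorphism or the theorem, but it does change the barrier ideals $\langle B_{f_i}\rangle$, contrary to what you assert. To close the argument at this point you would need to show that $\Phi$ can only \emph{shrink} barrier generators (hence never falsely blocks a critical pair), or simply invoke the paper's cell-complex observation, which handles this head-on by showing that the offending pairs $\{i,m+1\}$ are reducible in the first place.
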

\begin{proof}
Without loss of generality, we suppose that $LT(f_{m+1})$ is divided by $LT(f_m)$, then any critical pair $\{i,m+1\}$ can be taken as the composition of $\{i,m\}$ and $\{m,m+1\}$, which is easy to see from the triangle diagram in Fig. \ref{Fig-TwoPairDecStep1}. The critical pairs, denoted by edges in Fig. \ref{Fig-TwoPairDecStep1}, are characterized by the minimal common factor of the leading terms of $f$. Three critical pair edges form a triangle. This triangle characterize the syzygy of the syzygy module of the three elements. If the face common factor equals an edge common factor, then according to \cite{math12} the syzygy corresponding to the edge is reducible.  Since all syzygies are decomposed as the compositions of the critical pairs, any  irreducible syzygy does not refer to $f_{m+1}$. We replace $f_{m+1}$ with $f'_{m+1}$, which is named as Top-reduction. The Top-reduction is done recursively and finally we have a fully reduced $f'_{m+1}$. If the Top-reduction gives us $f'_{m+1}=0$, this will induce a syzygy. This syzygy is a high rank syzygy. After adding this to the syzygy module, the critical pair can never be used in further steps.
\end{proof}
If a S-polynomial is reduce to zero under Top-reduction, then there is a higher rank trivial syzygy
$$\left(\begin{array}{cccccccccc}
 & f_1&\cdots & f_{i-1}& f_i                     & \cdots &f_{j-1}& f_j                     &\cdots  &f_m \cr
 & a_1&\cdots &a_{i-1} &{LCMij\over LT(f_i)}+a_i &\cdots&a_{j-1}  & {LCMij\over LT(f_j)}+a_j&\cdots  &a_m
\end{array}\right)$$
where $a_i$ is the coefficient of $f_i$ under Top-reduction for $S_{i,j}$. This  only happens in the beginning of the critical pair decomposition. Our algorithm is able to forbid such redundant pairs.
\begin{figure}
  \centering
  \includegraphics[width=12cm]{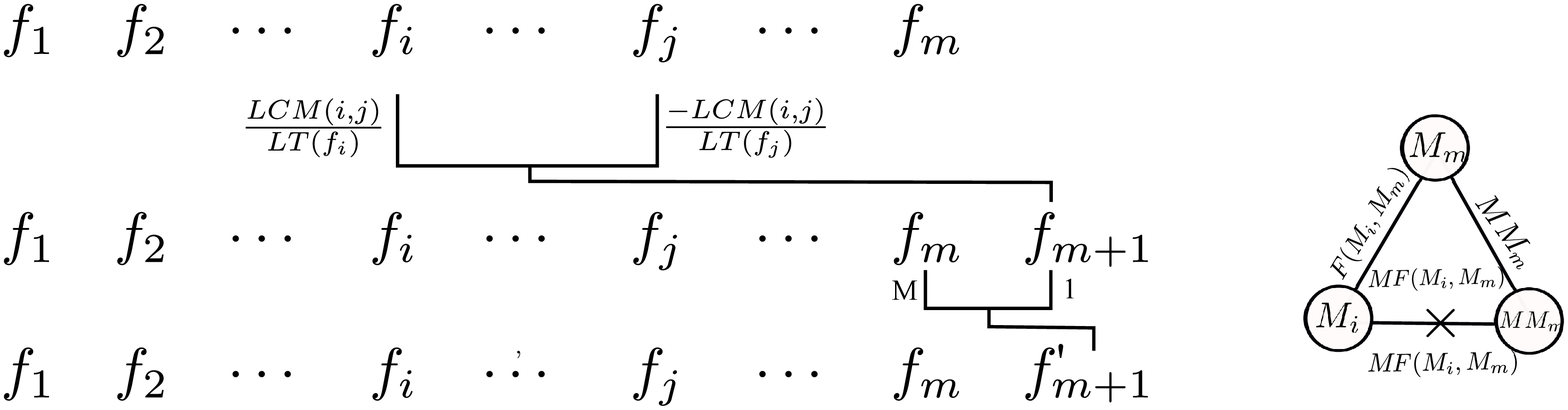}\\
  \caption{Triangle diagram for our theorem.}  \label{Fig-TwoPairDecStep1}
\end{figure}

\paragraph{End criterion }
The former procedures are  proceeded recursively and stop if all the critical pairs are not permitted. This means that all  other syzygies can be reduce to zero under the existing syzygies, and we finally obtain the  all the irreducible  syzygies. These syzygies form a  basis of the syzygy module. Moreover, we also get the Gr\"{o}bner basis of the ideal. The  Gr\"{o}bner basis is formed by the polynomials in final step whose barriers are not equal to $\langle 1\rangle$. This procedure is shown in flow chart Fig. \ref{Fig-C2Z}.
\begin{figure}
  \centering
  \includegraphics[width=8cm]{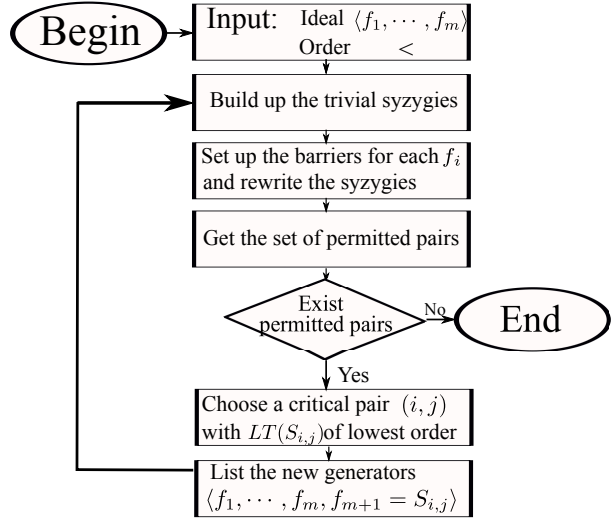}\\
  \caption{End criterion.}  \label{Fig-C2Z}
\end{figure}

\paragraph{Independence criterion  for the syzygy}
In performing each critical pair, we can keep the new added syzygy from being reduced by the old ones. This is only part of the story for the irreducibility among syzygies. To remove all the  reducible syzygies and obtain the locally minimal generating set of the syzygies, we need to deal with much more in each step. Let's first discuss relations between syzygies on level $(t-1)$ and those on level $t$.
\begin{itemize}
\item $\{\text{Syzygies on level $t-1$}\}\subset\{\text{Syzygies on level $t$}\}$.
\item A permitted rank-2 syzygy on level $t$ combining with the critical pair induced syzygy  will generate a higher rank syzygy on level $t-1$.
\end{itemize}
When we use the syzygies on level $t$ to set up the barriers, new permitted rank-2 syzygies can not be reduced on level $t$. If it is reduced to zero at  on level $t-1$, we denote this syzygy as $F$. It means that this syzygy generates a new barrier to the following steps, but a reducible syzygy on level $t-1$. In turn, this also generates a linearly reducible syzygy for the original m-tuple polynomial and we do not need to add it to the Top-syzygy set. If it can not be reduced to zero on level $t-1$, then we denote it a $T$ and add it to the Top-syzygy set.

We suppose that  the rewriting on level $t-1$ is performed and a syzygy is reduced to zero. If it is of $F$-type, then we just remove it. For $T$-type syzygy,  we need to be more careful. If it is reduced to zero all by  $T$-type and critical pair syzygies, then we remove it. If it is reduced to zero by a $F$-type syzygy and some other type of syzygies, then we replace it by this $F$-type syzygy in the Top-syzygy set. This is because the two syzygies are equivalent. In order to continue the  following steps, this $F$-type syzygy is used. However for the sake of obtaining the right and simple Top-syzygies, we use the $T$-type syzygy. Hence the replacing $F$-type syzygy also come from the eliminating pair of the $T$-type syzygy. Remark that if the original $T$-type syzygy is reduced to zero by several $F$-type syzygies, we should replace it syzygy by the last adding $F$-type syzygy. The irreducibility  of these syzygies is justified by rewriting of the syzygy module as is shown in Fig. \ref{Fig-OutputSyz}.
\begin{figure}
  \centering
  \includegraphics[width=15cm]{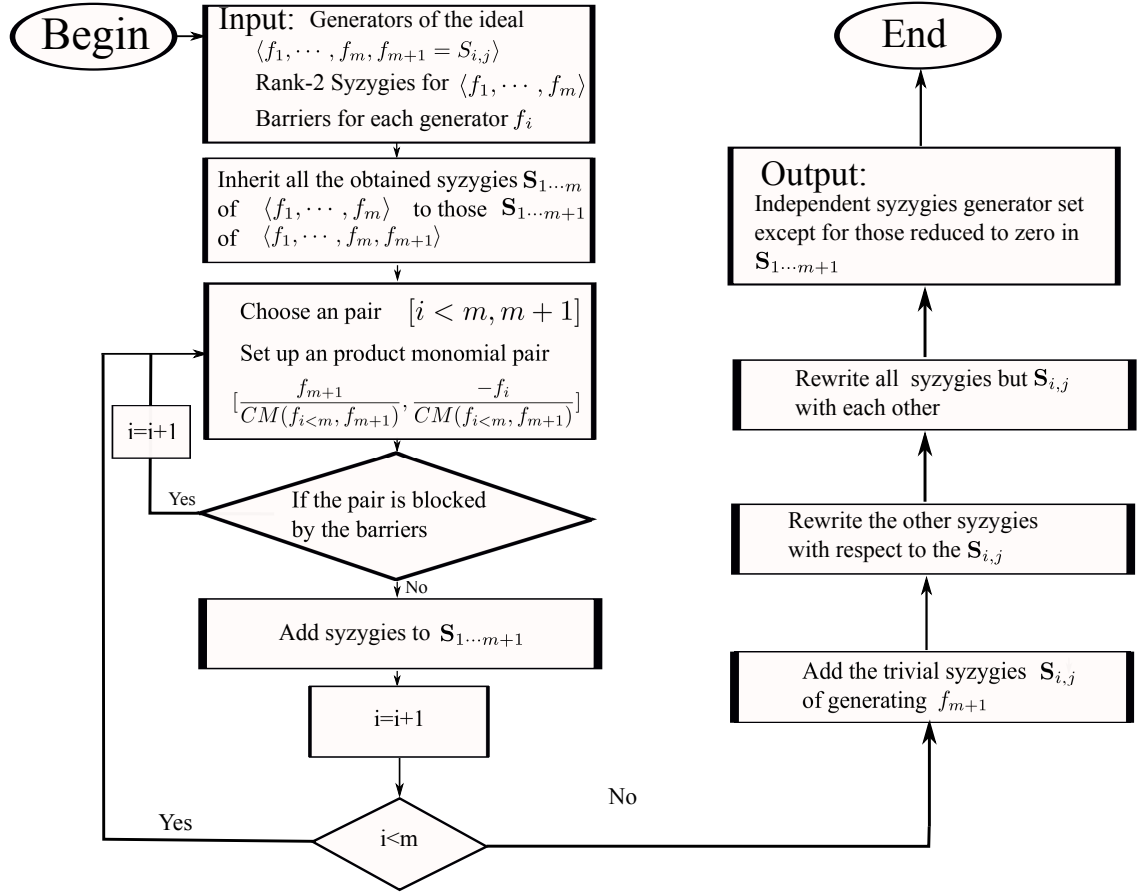}\\
  \caption{Top-reduction of the syzygy module.}  \label{Fig-OutputSyz}
\end{figure}
The above procedure is performed recursively. In principle, the reducibility among the syzygies can not be fully determined at each level. However, in performing the critical pairs one by one, if we keep to reduce all the syzygies at each level, and delete all the abandoned syzygies one by one, we can finally get the irreducible syzygies for zero level generators. This is easy to see. For convenience, we define the leading critical pair for each syzygy as following:  A pair with coefficients $(C^{S}_{i}, C^{S}_{j})$ of column $i$ and column $j$ in $\mathbf{S}$ is called the leading critical pair if $C^{S}_{i}LT(f_i)+C^{S}_{j}LT(f_j)=0$ and  $C^{S}_{i}LT(f_i)$ is of the largest monomial order comparing with other pairs,  where $C^{S}$ is the leading term of  $\mathbf{S}$.   We only need to verify that at the end of our procedure if $\mathbf{S}_t=c_1\mathbf{S}_1+c_2\mathbf{S}_2\cdots+c_K\mathbf{S}_K$, then $\mathbf{S}_t$ will be reduced to zero.  According to the equation, the leading critical pair of $\mathbf{S}_t$ should be reducible by the  leading critical pair of $\mathbf{S}_k$, $k\in{1\cdots K}$.  The leading critical pair in each $\mathbf{S}_k$  contains the index of the first non-zero column except for those critical pair induced syzygies.  Otherwise there will be more allowed critical pairs and the main loop can not break, thus the leading critical pair of $\mathbf{S}_t$ is able to be reduced by the syzygies  $\mathbf{S}_k$, $k\in{1\cdots K}$.

Our algorithm generates the complete syzygy module. In fact, we include all the possible syzygy in performing  each critical pair. We only  abandon  those reducible syzygies. After performing all the irreducible critical pairs, there is no other syzygy not in the module.  To prove this, we only need to verify that any syzygy $\mathbf{S}$ can be decomposed by  $\mathbf{S}_1, \mathbf{S}_2\cdots \mathbf{S}_K$, where $\mathbf{S}_{i\in [1, K]}$ is the syzygy module we obtained. For each syzygy of the ideal generators, the leading critical pair should be a summation of the multiples of some critical pairs $\{i_1,j_1\},\cdots, \{i_{n_1},j_{n_2}\}$.  If the critical pair does not generate a new ideal generator, this critical pair will generate a syzygy $\mathbf{S}_i$ which is in our syzygy module. Then we have $\mathbf{S}=\mathbf{S}'+\mathbf{S}_i$.  For $\mathbf{S}'$, the leading pair is of small order.  If the critical pair generates a new  ideal generator, the $\mathbf{S}$ become a syzygy $\mathbf{S}^1$ in next level. After transforming the leading critical pair, we find that the syzygy $\mathbf{S}$ either decreases the order or becomes  syzygy on next level.  Then we transform the leading critical pair step by step. Finally, the steps stop when $\mathbf{S}$ become  either zero or a syzygy on the final level. The ideal generators on the final level form a  Gr\"{o}bner basis. For the first case, it just means $\mathbf{S}$ lies in our syzygy module. For the last case, $\mathbf{S}$ is able to be decomposed as the summation of the syzygies induced by the critical pairs. At finial step, each critical pair is blocked by the barriers. Then the corresponding syzygies are all in the  syzygy module. Hence $\mathbf{S}$ belong to the syzygy module. 

Now, we get the full description of this new algorithm, we name it as $\text{C2Z}$ algorithm.  In the following, we apply our algorithm to several examples.
\paragraph{A short cut for regular sequence} We first discuss two-tuple regular sequence $(f_1, f_2)$. The principle syzygy  is $(f_2, -f_1)$. In our algorithm,  this can generate a barrier ideal $\langle LT(f_2)\rangle$ on $f_1$.  Then the critical pair $\{1,2\}$ is permitted when $LT(f_1)$ and $LT(f_2)$ have a proper greatest common factor.  Our algorithm will stop when we get a Gr\"{o}bner basis.  The  syzygy module of $(f_1, f_2)$ is induced by all the syzygies in the following levels. If a new syzygy on the following level  is of  T-type, then the corresponding Top-syzygy $(h_1, h_2)$ of $(f_1, f_2)$ is not blocked by the barrier ideal $\langle LT(f_2)\rangle$. Here $h_1, h_2$ are  polynomials. Hence $LT(f_2)$ is not a  divisor of $LT(h_1)$. But  $(f_1, f_2)$ is a  regular sequence. Then  $f_2$ should be a divisor of $h_1$. In turn, $LT(f_2)$ is a  divisor of $LT(h_1)$. Then we get the T-type syzygy does not exist. The syzygy module is generated  by the principle syzygy for regular sequence $(f_1, f_2)$. 

Likely, for $m$-tuple regular sequence  $(f_1, f_2,\cdots, f_m)$,  a syzygy arising on a following  level $t$ generate a Top-syzygy. If it is of T-type, the corresponding  Top-syzygy is of form $\mathbf{S}_t=(0_1, 0_2,\cdots, 0_i, h_{i+1}, h_{i+2}, \cdots, h_{m})$ in general. For regular sequence, we have $h_{i+1}\in \langle f_{i+2}, f_{i+3},\cdots, f_{m}\rangle$. Since we finally get the Gr\"{o}bner basis for the ideal.   $\mathbf{S}_t$ will be reduced by the principle syzygies on a level below $t$. No new syzygy is  added to the syzygy module for the $m$-tuple regular sequence  $(f_1, f_2,\cdots, f_m)$.  Furthermore, the principle syzygies are  possible to be reduced among each other. This is easy to be realized by the cell complex composed by the leading terms of entry in $(f_1, f_2,\cdots, f_m)$. Such kind of reduction is also included in our algorithm.  

To test the program, we need to compare with the popular software \textit{{\sc Singular}} \cite{Singular}. By \textit{{\sc Singular}}, the local minimal generating set of the syzygy module is obtained according to the Gr\"{o}bner basis of the syzygies of the syzygy module. In the Gr\"{o}bner basis, a syzygy of the syzygy module has
$$(h_1,\cdots, h_{i-1}, h_{i}, h_{i+1}, \cdots, h_K)\cdot (\mathbf{S}_1,\cdots,\mathbf{S}_{i-1},\mathbf{S}_i,\mathbf{S}_{i+1}, \cdots,\mathbf{S}_K)=0,$$ 
where $K$ is an integral number denoting the generator number of the syzygy module.  If one of the entry of the syzygy $h_i\in \mathbb{Z}$,  then $\mathbf{S}_i$ is reducible.  Let's consider the following ideals with  integral coefficients 
\begin{eqnarray*}
I_1&=&\left\langle x_2 + x_2^2 + x_1 x_3,\ x_1 + 3 x_1^2,\ x_1^2, x_1 x_2 + x_2^2 + x_3 + x_3^2,\ 
 x_2^2 + x_1 x_3 + x_3^2\right\rangle,\\
I_2&=&\left\langle x_3 + x_1 x_3 + x_2 x_3,\ x_1 + x_1 x_2 + 2 x_3^2,\ x_1 x_2 + x_3 + x_1 x_3 + x_2 x_3,\ 
x_1 x_2 + x_2^2 + x_3 + x_2 x_3,\right.\\
&&\left.x_1 x_2 + x_2^2 + x_1 x_3 + x_2 x_3, x_2^2 + 2 x_1 x_3 \right\rangle,\\
I_3&=&\left\langle x_3 + x_4 + x_1 x_4 + x_3 x_4,\ x_1 + x_1 x_4,\ x_2^2 + x_3 + x_3^2 \right\rangle,\\
I_4&=&\left\langle x_1 + 2 x_2 + x_1 x_4,\ x_1 x_2 + x_1 x_3 + x_4 + x_2 x_4,\ 
x_2 x_3 + x_3^2,\ x_1 + x_2 + x_4 + x_1 x_4,\ x_1 x_2 \right\rangle,\\
I_5&=&\left\langle x_1 x_2 x_3 + x_2 x_4^2,\ 
x_1 x_4 + x_4^2,\ x_4^3,\ x_4 + x_1 x_2 x_4 + x_3^2 x_4, x_3^3 + x_1 x_2 x_4,\ x_2 x_4^2 \right\rangle,\\
I_{6}&=&\left\langle x_1^2 x_3,x_1 x_2^2 x_3^2+x_1 x_2 x_3^3,x_1 x_2 x_3^4+x_3\right\rangle,\\
I_{7}&=&\left\langle x_1^6+x_1 x_2^3 x_3+x_1 x_2 x_3^4+x_2,x_1^4 x_3^2+x_1^3 x_2^2+x_1^2 x_2^3 x_3,x_1+x_2^3 x_3^2,x_1 x_2^2 x_3^3,x_3^6,x_1 x_2 x_3^2+x_3^3\right\rangle,\\
I_{8}&=&\left\langle x_1^3 x_2 x_3^3+x_1^3+x_1^2 x_2^2,x_1^6 x_3+x_1 x_2^2 x_3+x_1 x_2 x_3^2+x_1,x_3^3,x_1^3 x_3^2+x_1^3+x_3,x_1 x_3^4+x_2^2 x_3^3\right\rangle,\\
I_{9}&=&\left\langle x_1^4 x_2 x_3^3+x_1^2 x_2 x_3^5+x_1 x_2^3 x_3^3,x_1^2 x_2^4 x_3+x_2^6,x_1^3 x_2^3 x_3^2+x_1^2 x_2+x_2 x_3^2,x_1^3 x_2^2 x_3^2,x_1^2 x_2^2+x_1\right\rangle.\\
\end{eqnarray*}
We compare the our results  and the timing of this algorithm with \textit{{\sc Singular}}  on a computer with CPU 2.4GHz and RAM 8G as shown in following
$$
\begin{array}{|c|c|c|c|c|c|c|}\hline
 & T_1/s & n_1 & T_2/s & n_2&T_3/s& n_3\\ \hline
 I_1& 0.04 & 5 & 0.03& 6 & 0.29&5\\ \hline
 I_2& 0.05 & 6  &0.05& 7 & 0.57&6\\ \hline
 I_3& 0.09 & 3 & 0.23 & 4 &0.38&3\\ \hline
 I_4& 0.27 & 7  & 0.14 & 10 & 1.2&7\\ \hline
 I_5& 0.32 & 6 &  0.19 & 9 &1.0 &6\\ \hline
 I_{6}& 0.09 & 2  &0.14& 3 & 0.28&2\\ \hline
 I_{7}& 0.88 & 8 & 0.15 & 12 & ?&?\\ \hline
 I_{8}& 1.59 & 11  & 0.26 & 17 & 1.61&11\\ \hline
 I_{9}& 0.53 & 6 &  0.15 & 15 & ? &?\\ \hline
 \end{array},
$$
where the $T_1, n_1$ is the timing and number of syzygies of our program,  $T_2, n_2$ is the timing and number of syzygies  in \textit{{\sc Singular}}, $T_3, n_3$ is the timing and number for getting the local minimal generating set of syzygies in \textit{{\sc Singular}}, the symbol ``$?$" denotes that the time is longer than half hour in our computer and we do not have the output of $n_3$. We also test for the ideals contain an uncertain constant $a$. 
\begin{eqnarray*}
I^a_1&=&\left\langle x_2^2 x_3,x_1 x_4+x_2 a^2,x_1+x_3^2 a,x_1 x_2+x_2^2 x_3 x_4+x_2 x_4 a\right\rangle,\\
I^a_2&=&\left\langle x_2 x_3^2+x_3^3+x_3 a,x_1^2 a,x_1^3+x_3^2,x_2+x_3,x_1^2 x_3+x_1 x_3^2+a^3,x_1^2 x_2+x_1 x_3 a+x_1 a+x_2^3,x_2\right\rangle,\\
I^a_3&=&\left\langle x_1^2 x_2+x_1 x_3+x_1 a+x_2^2,x_1^2+x_1 a^2+x_1,x_1^2+a,x_3,x_1^2 x_2+x_3^2 a\right\rangle,\\
I^a_4&=&\left\langle x_1 x_2^2+x_1 x_2 a+x_1 x_3 a+x_3^2 a,x_1^2 x_3+x_1 x_3 a,x_2 x_3 a\right\rangle,\\
I^a_{5}&=&\left\langle x_1^2 x_3 a+x_1 x_2^2+x_2^3 x_3,x_1^3 x_2+x_2+x_3 a^2,x_1^3 x_3+x_1^3 a,x_1^2 x_3+x_1 x_2 x_3 a+x_3^3 a+x_3^2 a^2,x_1 x_3 a^2,x_1^2 x_2^2\right\rangle,\\
I^a_{6}&=&\left\langle x_1 x_2 x_3+x_3 a^2,x_1 x_2 x_3^2+x_2^2 a,x_2,x_2^2,x_3^2 a^2\right\rangle.
\end{eqnarray*}
$$
\begin{array}{|c|c|c|c|c|c|c|}\hline
 & T_1/s & n_1 & T_2/s & n_2&T_3/s& n_3\\ \hline
I^a_1& 0.32 & 6 & 0.14& 8 & 0.41& 6\\ \hline
 I^a_2& 0.09 & 7  &0.01& 9 & 0.38& 7\\ \hline
 I^a_3& 0.03 & 4 & 0.24 & 5 &0.40& 4\\ \hline
 I^a_4& 0.09 & 3  & 0.14 & 4 & 0.29& 3\\ \hline
 I^a_{5}& 0.11 & 8 &  0.24 & 9 &? & ?\\ \hline
 I^a_{6}& 0.02 & 4 & 0.14& 5 & 0.29& 4\\ \hline
\end{array},
$$
Our program was written in Mathematica. So there is still of much space to improve our program in $C^{++}$.

\subsection{Remarks On Syzygies Of Modules}
From the algorithm of syzygies for an ideal, it is straightforward to extend this algorithm to calculate the syzygies of any module. In fact, we calculate the syzygies of each column of the module. Then we combine them together to get the syzygies of the module. First, taking the the elements of the first column as an ideal, we calculate the syzygy directly by $\text{C2Z}$ . We denote this syzygy set by $\mathbf{B}^1$.  Then we perform dot product for all the syzygies in $\mathbf{B}^1$ with the second column. The obtained result generates a new ideal. The syzygy set of this new ideal is denoted as $\mathbf{C}^2$. Then, the syzygy module of the first two columns is $\mathbf{S}^e_j=\mathbf{C}^2_{j,i}\cdot \mathbf{S}^1_i$, where the $j$ denotes the syzygy index and $i$ is the component index for each syzygy. Repeating these operations to the last column, we obtain the final syzygies of the module. In each step, our algorithm can guarantee the minimal  generating set  of $\mathbf{C}^2$. There are extra  reductions after taking the dot product $\mathbf{C}^2_{j,i}\cdot \mathbf{S}^1_i$. An obvious example is a syzygy in $\mathbf{C}^2$ was just a syzygy of $\mathbf{S}^1$, dot product of them get a row of zero entry, which is obvious redundant. If two syzygies in $\mathbf{C}^2$ differ by a syzygy of $\mathbf{S}^1$, one of them is reduced by the other one in the final syzygy module. But in our method, this redundancy still remains for the syzygy of a module. To remove this redundancy, we need to get the syzygies $S_S$ of $\mathbf{S}^1_i$. In fact what we need is the $\mathbf{C}^2_{j,i} \mod S_S$. To get an efficient algorithm to eliminate the redundancy is beyond the scope of this paper. We leave this to future works. 

\section{Revisiting The IBP Relations}\label{sec2}
One important application of using syzygies is to obtain the  IBP relations in generalized unitarity cuts\cite{Gluza:2010ws,Schabinger:2011dz,Zhang:2014xwa,CaronHuot:2012ab}. The IBP relations are from the fact that  any total partial derivative of rational functions of loop momentum is vanishing under loop integration
\begin{align}
\int dl_1^d\cdots dl_L^d \partial_{l_I^\mu} v_J^\mu{C_{IJ}\over D_1 D_2\cdots D_m}=0~,
\end{align}
where $v^\mu_J\in\{l^\mu_1,l^\mu_2, \cdots, l^\mu_L, p^\mu_1,\cdots ,p^\mu_\beta\}$ for general $n$ point amplitude ($n$ point amplitudes have $\beta$ irreducible external-leg momentums, then we have $\beta \le 4$).  We define the $O_{IJ}\equiv \partial_{l_I^\mu} v_J^\mu$ as the generators of the IBP relations.  In Lorentz invariant parameters, $s_a\in\{l_{I_1}\cdot l_{I_2}, l_{I_1}\cdot p_J\}$ and $D_a$,  the IBP generators are defined in Baikov's method \cite{IBP1,IBP2,IBP3} as
\begin{align}
{{O}_{IJ}}=d {{\delta }_{IJ}}+\frac{\partial {{s}_{b}}}{\partial l_{I}^{\mu }}v_{J}^{\mu }\frac{\partial {{D}_{a}}}{\partial {{s}_{b}}} {\partial\over\partial D_a}~,
\end{align}
where $I\in\{1,2,\cdots,L\}$, $J\in\{1,2,\cdots,L+\beta\}$, $a,b\in\{1,2,\cdots, m\}$ and $d$ is the dimension of spacetime. When acting IBP generators on the dominators, it is useful to keep the dominators free of double poles according to the combination of the generators
\begin{align}
\sum_{IJ}c_{IJ}\frac{\partial {{s}_{b}}}{\partial l_{I}^{\mu }}v_{J}^{\mu }\frac{\partial {{D}_{a}}}{\partial {{s}_{b}}}+c_a D_a=0~.
\end{align}
We need to find all the solutions of these equations for $c_{IJ}, c_a$. With the definition
\begin{align}
{{Q}_{IJ}^a}=\frac{\partial {{s}_{b}}}{\partial l_{I}^{\mu }}v_{J}^{\mu }\frac{\partial {{D}_{a}}}{\partial {{s}_{b}}}~,
\end{align}
it is equivalent to find syzygies for the module
\begin{align}
M=\left(\begin{array}{cccc}
Q^1_{11}&Q^2_{11}&\cdots&Q^m_{11}\\
Q^1_{12}&Q^2_{12}&\cdots&Q^m_{12}\\
\cdots &\cdots&\cdots&\cdots \\
Q^1_{L,L+\beta}&Q^2_{L,L+\beta}&\cdots&Q^m_{L,L+\beta}\\
D_1&0&\cdots&0\\
0&D_2&\cdots&0\\
\cdots &\cdots&\cdots&\cdots \\
0&0&\cdots&D_m\\
\end{array}\right)~.
\end{align}

As a result, our analysis on syzygies comes to the story. In order to illustrate this application, we take a characteristic two-loop diagram as an example in Fig. \ref{feymann}. We will show that such diagram can reduce to zero under the generalized unitarity cut. Hence in physics, the four-point massless diagram is not a master integral. 
\begin{figure}
  \centering
  \includegraphics[width=7cm]{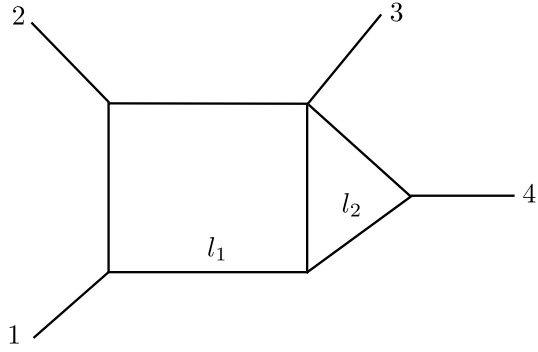}\\
  \caption{An example two-loop diagram.}  \label{feymann}
\end{figure}
The loop integration of this diagram is
\begin{align}
L=\int{{{d}^{4}}{{l}_{1}}}{{{d}^{4}}{{l}_{2}}}\frac{{{C }}}{D_1D_2D_3D_4D_5D_6}~,
\end{align}
where
\begin{align}
&{{D}_{1}}=l_{1}^{2}~,~~{{D}_{2}}={{({{l}_{1}}-{{p}_{1}})}^{2}}~,~~{{D}_{3}}={{({{l}_{1}}-{{p}_{1}}-{{p}_{2}})}^{2}}~,\nonumber\\
&{{D}_{4}}=l_{2}^{2}~,~~{{D}_{5}}={{({{l}_{2}}+{{p}_{4}})}^{2}}~,~~{{D}_{6}}={{({{l}_{1}}+{{l}_{2}})}^{2}}~,\nonumber\\
&{{D}_{7}}={{({{l}_{2}}+{{p}_{1}})}^{2}}~,~~{{D}_{8}}={{({{l}_{1}}+{{p}_{4}})}^{2}}~,~~{{D}_{9}}={{({{l}_{2}}+{{p}_{2}})}^{2}}~.
\end{align}
Extra $D_7$, $D_8$, $D_9$ are \emph{fake} propagators in order to keep the one-to-one correspondence between $D$ and $s$, where projections $s$ are defined as
\begin{align}
&{{s}_{1}}=l_{1}^{2}~,~~{{s}_{2}}={{l}_{1}}\cdot {{l}_{2}}~,~~{{s}_{3}}={{p}_{1}}\cdot {{l}_{1}}~,\nonumber\\
&{{s}_{4}}={{p}_{2}}\cdot {{l}_{1}}~,~~{{s}_{5}}={{p}_{4}}\cdot {{l}_{1}}~,~~{{s}_{6}}=l_{2}^{2}~,\nonumber\\
&{{s}_{7}}={{p}_{1}}\cdot {{l}_{2}}~,~~{{s}_{8}}={{p}_{2}}\cdot {{l}_{2}}~,~~{{s}_{9}}={{p}_{4}}\cdot {{l}_{2}}~.
\end{align}
$D_7$ is used to keep the one-to-one relation between $D_a$ and $s_a$. If we use symbols ${{v}_{J}}=\{{{l}_{1}},{{l}_{2}},{{p}_{1}},{{p}_{2}}\}$, where $J=1,2,3,4$, and ${{l}_{I}}=\{{{l}_{1}},{{l}_{2}}\}$, where $I=1,2$, then the effects of IBP operators $O_{IJ}$ on the propagators $D_a$ are given by
\begin{align}
{{O}_{IJ}}{{D}_{a}}=d{{\delta }_{IJ}}{{D}_{a}}+v_{J}^{\mu }\frac{\partial {{s}_{b}}}{\partial l_{I}^{\mu }}\frac{\partial {{D}_{a}}}{\partial {{s}_{b}}}~,
\end{align}
where $d$ is the dimension of spacetime. As a result, the module $M$ is 
\begin{align}
\left(
\begin{array}{ccl}
Q_{11}&\rightarrow&\{2D_{1} , D_{1}{\tiny\text{+}}D_{2} , D_{1}{\tiny\text{+}}D_{3}\text{-}2 k_{12} , 0 , 0 , D_{1}\text{-}D_{4}{\tiny\text{+}}D_{6} , 0 , D_{1}{\tiny\text{+}}D_{8} , 0\}\\
Q_{12}&\rightarrow&\{\text{-}D_{1}\text{-}D_{4}{\tiny\text{+}}D_{6} , \text{-}D_{1}{\tiny\text{+}}D_{6}\text{-}D_{7} , \text{-}D_{1}{\tiny\text{+}}D_{4}{\tiny\text{+}}D_{6}\text{-}D_{7}\text{-} D_{9} , 0 , 0 , \text{-}D_{1}{\tiny\text{+}}D_{4}{\tiny\text{+}}D_{6} , 0 , \text{-}D_{1}\text{-}2 D_{4}{\tiny\text{+}}D_{5}{\tiny\text{+}}D_{6} , 0 \}\\
Q_{13}&\rightarrow&\{D_{1}\text{-}D_{2} , D_{1}\text{-}D_{2} , D_{1}\text{-}D_{2}\text{-}2 k_{12} , 0 , 0 , D_{1}\text{-}D_{2}\text{-}D_{4}{\tiny\text{+}}D_{7} , 0 , D_{1}\text{-}D_{2}{\tiny\text{+}}2 k_{14} , 0 \}\\
Q_{14}&\rightarrow&\{ D_{2}\text{-}D_{3}{\tiny\text{+}}2 k_{12} , D_{2}\text{-}D_{3} , D_{2}\text{-}D_{3} , 0 , 0 , D_{2}\text{-}D_{3}\text{-}D_{4}{\tiny\text{+}}D_{9}{\tiny\text{+}}2 k_{12} , 0 , D_{2}\text{-}D_{3}\text{-}2 k_{14} , 0 \}\\
Q_{15}&\rightarrow&\{ \text{-}D_{1}{\tiny\text{+}}D_{8} , \text{-}D_{1}{\tiny\text{+}}D_{8}\text{-}2 k_{14} , \text{-}D_{1}{\tiny\text{+}}D_{8}{\tiny\text{+}}2 k_{12} , 0 , 0 , \text{-}D_{1}\text{-}D_{4}{\tiny\text{+}}D_{5}{\tiny\text{+}}D_{8} , 0 , \text{-}D_{1}{\tiny\text{+}}D_{8} , 0 \}\\
Q_{22}&\rightarrow& \{0 , 0 , 0 , 2 D_{4} , D_{4}{\tiny\text{+}}D_{5} , \text{-}D_{1}{\tiny\text{+}}D_{4}{\tiny\text{+}}D_{6} , D_{4}{\tiny\text{+}}D_{7} , 0 , D_{4}{\tiny\text{+}}D_{9}\} \\
Q_{23}&\rightarrow& \{0 , 0 , 0 , \text{-}D_{4}{\tiny\text{+}}D_{7} , \text{-}D_{4}{\tiny\text{+}}D_{7}{\tiny\text{+}}2 k_{14} , D_{1}\text{-}D_{2}\text{-}D_{4}{\tiny\text{+}}D_{7} , \text{-}D_{4}{\tiny\text{+}}D_{7} , 0 , \text{-}D_{4}{\tiny\text{+}}D_{7}{\tiny\text{+}}2 k_{12}\} \\
Q_{24}&\rightarrow&\{0 , 0 , 0 , \text{-}D_{4}{\tiny\text{+}}D_{9} , \text{-}D_{4}{\tiny\text{+}}D_{9}\text{-}2 (k_{12}{\tiny\text{+}}k_{14}) , D_{2}\text{-}D_{3}\text{-}D_{4}{\tiny\text{+}}D_{9}{\tiny\text{+}}2 k_{12} , \text{-}D_{4}{\tiny\text{+}}D_{9}{\tiny\text{+}}2 k_{12} , 0 , \text{-}D_{4}{\tiny\text{+}}D_{9} \}\\
 Q_{25}&\rightarrow&\{0 , 0 , 0 , \text{-}D_{4}{\tiny\text{+}}D_{5} , \text{-}D_{4}{\tiny\text{+}}D_{5} , \text{-}D_{1}\text{-}D_{4}{\tiny\text{+}}D_{5}{\tiny\text{+}}D_{8} , \text{-}D_{4}{\tiny\text{+}}D_{5}{\tiny\text{+}}2 k_{14} , 0 , \text{-}D_{4}{\tiny\text{+}}D_{5}\text{-}2 (k_{12}{\tiny\text{+}}k_{14})\}\\
 Q_1&\rightarrow&\{D_1 , 0 , 0 , 0, 0 , 0, 0 , 0 , 0\}\\
&\cdots&\\
      Q_6&\rightarrow&\{0 , 0 , 0 , 0, 0 , D_6, 0 , 0 , 0\}\nonumber
\end{array}
\right)~,
\end{align}
where $k_{12}=p_1\cdot p_2$ and $k_{14}=p_1\cdot p_4$.

Finally, we found 64 syzygies. We have verified that the syzygy module is same as the output of \textit{{\sc Singular}} with 87 syzygies. We first use one of them which is of first order
\begin{align}
\textbf{C}^{(1)}&=\left(0, 0, 0, 0, 0, \text{-}D_4 {\tiny{\text{+}}} D_5, \text{-}D_1 \text{-} 2 D_4 {\tiny{\text{+}} } 2 D_5 {\tiny{\text{+}}} D_8, 0, 0, D_1 \text{-} D_4 \text{-} D_6, 0\right), \nb\\
\textbf{c}^{(1)}&=\left(0, 0, 2 D_1 {\tiny{\text{+}} } 2 D_4 \text{-} 2 D_5 \text{-} 2 D_8, 2 D_1 {\tiny{\text{+}} }2 D_4 \text{-} 2 D_5 \text{-} 2 D_8, 2 D_4 \text{-} 2 D_5\right).\nb
\end{align}

According to the integrand reduction \cite{Mastrolia:2012bu,Ellis:2008ir,Giele:2008ve,Bevilacqua:2011xh,Hirschi:2011pa,Ossola:2006us,Badger:2010nx,Mastrolia:2010nb,Ellis:2007br,Ossola:2007ax,Zhang:2012ce}, the naive power counting of renormalizable condition constrains the irreducible integrals as following, $\frac{D_7^{n_7} D_8^{n_8} D_9^{n_9}}{D_1 D_2\cdots D_6}$ with constraints $n_7\leq 4$, $n_8+n_9\leq 2$, and $n_7+n_8+n_9\leq 4$. An IBP relation induce a integral equation among the irreducible integrand
\begin{align}
0=\int \sum_{IJ}\frac{O_{IJ}c_{IJ} }{D_1\cdots D_6}+\sum_{a=1}^6{c_a\over D_1\cdots D_6},
\end{align}
where $c_{IJ}$ and $c_a$ are the components of $\textbf{C}^{(1)}$ and $\textbf{c}^{(1)}$ respectively.

We shall show how IBP relations work by demonstrating the IBP relation generated by the first syzygy which we pick. First, we calculate the action of operator $O_{IJ}$ on vector $C_{IJ}^{(1)}$ and the result is simple: $-d D_1-2 d D_4+2 d D_5+d D_8$. Then we add it with $\sum _{i=1}^6 c_i$, and we have an IBP relation $$\int \frac{(4-d) D_1+(6-2 d) D_4+(2 d-6) D_5+(d-4) D_8}{D_1 D_2 \ldots D_6  }=0.$$ Similarly, IBP relations in diagrams with less propagators can reduce $\int\frac{D5}{D_1 D_2 \ldots D_6  }$ to $$-\frac {(3 d - 10) (3 d - 8) } {4 (d - 4)^2 k_{12}^2}\int\frac{1}{D_3 D_4 D_6}.$$ Finally, the IBP relation is: $$\int\frac{(d-4) D_8}{D_1 D_2 \ldots D_6  }= \int\frac{(4-d)} {D_2 D_3 D_4 D_5 D_6  }+\frac{(6-2 d)}{D_1 D_2 D_3 D_5 D_6}+\frac {(2 d-6)(3 d - 10) (3 d - 8) } {4 (d - 4)^2 k_{12}^2D_3 D_4 D_6}.$$This result is consistent with the output of FIRE5 \cite{Smirnov:2014hma}. Our program spent 42 seconds while FIRE5 spent 36 seconds in getting this result.

To reduce the integral further, we need to use more IBP relations. For the general $r$-order  syzygies, we use $m_{i_1\cdots i_r}$  denotes the product of $\{D_1, D_2,\cdots, D_9, k_{12}, k_{14}\}$, where the subindex $i\in \{1\cdots 9\}$ represent a $D_i$ factor in the product, $i\in {a, b}$ denotes a factor  $k_{12}$ and $k_{14}$ in the product respectively. For example $m_{1a}=D_1 k_{12}, m_{12}=D_1 D_2$.We choose the following syzygies
 \begin{align}
\textbf{C}^{(2)}=&(-m_{1b}+m_{2a}+m_{3b}-m_{8a},0,m_{1b}+m_{8a},m_{1b},m_{2a},0,\nb\\&-2 m_{4a}-2 m_{4b}+m_{5a}+m_{7a}+m_{7b}+m_{9b}+2 m_{ab},-2 m_{4a}-m_{4b}+m_{5a},-m_{4b},-m_{7a})\nb\\
\textbf{c}^{(2)}=&(m_{1b}-m_{2a}-m_{3b}+m_{8a}-2 m_{ab},m_{1b}-m_{2a}-m_{3b}+m_{8a}+2 m_{ab},\nb\\ &m_{1b}-m_{2a}-m_{3b}+m_{8a}+2 m_{ab}, 2 m_{4a}+2 m_{4b}-m_{5a}-m_{7a}-m_{7b}-m_{9b}-4 m_{ab},\nb\\ &2 m_{4a}+2 m_{4b}-m_{5a}-m_{7a}-m_{7b}-m_{9b}-4 m_{ab},\nb\\ &m_{1b}-m_{2a}-m_{3b}+2 m_{4a}+2 m_{4b}-m_{5a}-m_{7a}-m_{7b}+m_{8a}-m_{9b}-2 m_{ab}),\nb
\end{align}
 \begin{align}
\textbf{C}^{(3)}=&(-2 m_{14}-m_{17}+4 m_{24}-m_{27}-m_{29}-2 m_{34}+m_{37}+4 m_{4a}+m_{78}+m_{89},0,\nb\\ &2 m_{14}+m_{17}-2 m_{24}+2 m_{34}-4 m_{4a}-m_{78}-m_{89},m_{17}-2 m_{24},-m_{27}-m_{29},0,\nb\\ &2 m_{47}-m_{57}-m_{59}+2 m_{7a},m_{47}-m_{57}-m_{59},m_{47},m_{77}+m_{79})\nb\\
\textbf{c}^{(3)}=&(2 m_{14}+m_{17}-4 m_{24}+m_{27}+m_{29}+2 m_{34}-m_{37}-4 m_{4a}-m_{78}-2 m_{7a}-m_{89},\nb\\ &2 m_{14}+m_{17}-4 m_{24}+m_{27}+m_{29}+2 m_{34}-m_{37}-8 m_{4a}-m_{78}-2 m_{7b}-m_{89}-2 m_{9b},\nb\\ &2 m_{14}+m_{17}-4 m_{24}+m_{27}+m_{29}+2 m_{34}-m_{37}-4 m_{4a}-m_{78}+2 m_{7a}-m_{89},\nb\\ &-2 m_{47}+m_{57}+m_{59}-4 m_{7a},-2 m_{47}+m_{57}+m_{59}-2 m_{7a}+2 m_{7b}+2 m_{9b},\nb\\ &2 m_{14}+m_{17}-4 m_{24}+m_{27}+m_{29}+2 m_{34}-m_{37}-2 m_{47}-4 m_{4a}+m_{57}+m_{59}\nb\\ &-m_{78}-2 m_{7a}-m_{89}).\nb
\end{align}

 \begin{align}
\textbf{C}^{(4)}=&(-4 m_{119}-2 m_{124}-2 m_{127}+4 m_{129}+4 m_{134}+2 m_{137}-4 m_{139}+2 m_{145}+2 m_{148}\nb\\ &-8 m_{14a}-4 m_{17a}+2 m_{189}+8 m_{19a}+8 m_{224}-2 m_{227}-2 m_{229}-12 m_{234}+4 m_{237}\nb\\ &+2 m_{239}-3 m_{245}-3 m_{248}+20 m_{24a}+2 m_{278}-4 m_{27a}+2 m_{289}+4 m_{334}-2 m_{337}\nb\\ &+2 m_{345}-16 m_{34a}-2 m_{378}+4 m_{37a}-m_{458}-4 m_{45a}-m_{488}+16 m_{4aa}+4 m_{78a},\nb\\ &0,4 m_{119}+2 m_{124}+2 m_{127}-2 m_{129}-4 m_{134}-2 m_{137}+4 m_{139}-2 m_{145}-2 m_{148}\nb\\ &+8 m_{14a}+4 m_{17a}-2 m_{189}-8 m_{19a}-4 m_{224}+8 m_{234}+2 m_{245}+2 m_{248}-16 m_{24a}\nb\\ &-2 m_{278}-2 m_{289}-4 m_{334}-2 m_{345}+16 m_{34a}+2 m_{378}+m_{458}+4 m_{45a}+m_{488}\nb\\ &-16 m_{4aa}-4 m_{78a},-2 m_{124}+2 m_{127}-2 m_{129}-2 m_{137}+4 m_{17a}-4 m_{224}+4 m_{234}\nb\\ &+2 m_{245}+2 m_{248}-8 m_{24a},-2 m_{129}-2 m_{227}-2 m_{229}+2 m_{234}+2 m_{237}+m_{245}\nb\\  &+m_{248}-4 m_{24a}-4 m_{27a},0,-2 m_{159}+2 m_{179}-2 m_{244}+4 m_{247}+4 m_{249}-4 m_{24a}\nb\\ &-2 m_{257}-2 m_{259}-2 m_{279}+4 m_{27a}-2 m_{299}+4 m_{29a}+2 m_{345}-4 m_{347}+2 m_{357}\nb\\ &+2 m_{379}-4 m_{37a}+m_{455}-m_{457}+m_{458}-4 m_{45a}-m_{478}+8 m_{47a}-4 m_{57a}-4 m_{79a}+8 m_{7aa},\nb\\ &-2 m_{159}-2 m_{244}+2 m_{247}+2 m_{249}-2 m_{257}-2 m_{259}+2 m_{345}-2 m_{347}+2 m_{357}\nb\\ &+m_{455}+m_{458}-4 m_{45a}+4 m_{47a}-4 m_{57a},-2 m_{244}+2 m_{247}+2 m_{249}-2 m_{347}+4 m_{47a},\nb\\ &2 m_{179}+2 m_{244}-2 m_{247}-2 m_{249}+2 m_{277}+2 m_{279}-2 m_{377}-m_{457}-m_{478}+4 m_{77a}),\nb\\
\textbf{c}^{(4)}=&(4 m_{119}+2 m_{124}+2 m_{127}-4 m_{129}-4 m_{134}-2 m_{137}+4 m_{139}-2 m_{145}-2 m_{148}\nb\\ &+8 m_{14a}+4 m_{17a}-2 m_{189}-8 m_{19a}-8 m_{224}+2 m_{227}+2 m_{229}+12 m_{234}-4 m_{237}\nb\\ &-2 m_{239}+3 m_{245}+3 m_{248}-16 m_{24a}-2 m_{278}-2 m_{289}-4 m_{29a}-4 m_{334}+2 m_{337}\nb\\ &-2 m_{345}+16 m_{34a}+2 m_{378}+m_{458}+4 m_{45a}+m_{488}-16 m_{4aa}-4 m_{78a}-8 m_{7aa},\nb\\ &4 m_{119}+2 m_{124}+2 m_{127}-4 m_{129}-4 m_{134}-2 m_{137}+4 m_{139}-2 m_{145}-2 m_{148}\nb\\ &+8 m_{14a}+4 m_{17a}-2 m_{189}-16 m_{19a}-4 m_{19b}-8 m_{224}+2 m_{227}+2 m_{229}+12 m_{234}\nb\\ &-4 m_{237}-2 m_{239}+3 m_{245}+3 m_{248}-28 m_{24a}-2 m_{278}+4 m_{27a}-4 m_{27b}-2 m_{289}\nb\\ &-4 m_{29b}-4 m_{334}+2 m_{337}-2 m_{345}+24 m_{34a}+4 m_{34b}+2 m_{378}-4 m_{37a}+4 m_{37b}\nb\\ &+m_{458}+8 m_{45a}+2 m_{45b}+m_{488}+4 m_{48a}+2 m_{48b}-32 m_{4aa}-8 m_{4ab}-4 m_{78a}-8 m_{7ab},\nb\\ &4 m_{119}+2 m_{124}+2 m_{127}-4 m_{129}-4 m_{134}-2 m_{137}+4 m_{139}-2 m_{145}-2 m_{148}\nb\\ &+8 m_{14a}+4 m_{17a}-2 m_{189}-8 m_{19a}-8 m_{224}+2 m_{227}+2 m_{229}+12 m_{234}-4 m_{237}\nb\\ &-2 m_{239}+3 m_{245}+3 m_{248}-24 m_{24a}-2 m_{278}+8 m_{27a}-2 m_{289}+4 m_{29a}-4 m_{334}+2 m_{337}\nb\\ &-2 m_{345}+16 m_{34a}+2 m_{378}-8 m_{37a}+m_{458}+4 m_{45a}+m_{488}-16 m_{4aa}-4 m_{78a}+8 m_{7aa},\nb\\ &2 m_{159}-2 m_{179}+2 m_{244}-2 m_{245}-4 m_{247}-4 m_{249}+8 m_{24a}+4 m_{257}+4 m_{259}+2 m_{279}\nb\\ &-8 m_{27a}+2 m_{299}-8 m_{29a}-2 m_{345}+4 m_{347}-4 m_{357}-2 m_{379}+8 m_{37a}-m_{455}+m_{457}\nb\\ &-m_{458}+4 m_{45a}+m_{478}-8 m_{47a}+8 m_{57a}+4 m_{79a}-16 m_{7aa},2 m_{159}-2 m_{179}+4 m_{19b}\nb\\ &-2 m_{247}-2 m_{249}+4 m_{24a}+2 m_{257}+2 m_{259}+2 m_{279}-4 m_{27a}+4 m_{27b}+2 m_{299}-4 m_{29a}\nb\\ &+4 m_{29b}-2 m_{345}+2 m_{347}-4 m_{34b}-2 m_{357}-2 m_{379}+4 m_{37a}-4 m_{37b}-m_{455}+m_{457}-m_{458}\nb\\ &+4 m_{45a}-2 m_{45b}+m_{478}-4 m_{47a}-2 m_{48b}+8 m_{4ab}+4 m_{57a}+4 m_{79a}-8 m_{7aa}+8 m_{7ab},\nb\\ &4 m_{119}+2 m_{124}+2 m_{127}-4 m_{129}-4 m_{134}-2 m_{137}+4 m_{139}-2 m_{145}-2 m_{148}+8 m_{14a}+2 m_{159}\nb\\ &-2 m_{179}+4 m_{17a}-2 m_{189}-8 m_{19a}-8 m_{224}+2 m_{227}+2 m_{229}+12 m_{234}-4 m_{237}-2 m_{239}\nb\\ &+2 m_{244}+3 m_{245}-4 m_{247}+3 m_{248}-4 m_{249}-16 m_{24a}+2 m_{257}+2 m_{259}-2 m_{278}+2 m_{279}\nb\\ &-2 m_{289}+2 m_{299}-4 m_{29a}-4 m_{334}+2 m_{337}-4 m_{345}+4 m_{347}+16 m_{34a}-2 m_{357}+2 m_{378}\nb\\ &-2 m_{379}-m_{455}+m_{457}+8 m_{45a}+m_{478}-8 m_{47a}+m_{488}-16 m_{4aa}+4 m_{57a}-4 m_{78a}\nb\\ &+4 m_{79a}-8 m_{7aa}).\nb
\end{align}

 \begin{align}
\textbf{C}^{(5)}=&(22 m_{11}-34 m_{12}+14 m_{13}+32 m_{14}-6 m_{15}+46 m_{17}-11 m_{18}-20 m_{19}-76 m_{1a}+8 m_{1b}\nb\\ &-84 m_{24}+24 m_{25}+6 m_{27}+16 m_{28}+6 m_{29}-50 m_{2a}+72 m_{34}-18 m_{35}-26 m_{37}\nb\\ &-7 m_{38}-88 m_{3b}-20 m_{48}-104 m_{4a}+24 m_{5a}+40 m_{6a}+80 m_{6b}-26 m_{78}-40 m_{7a}\nb\\ &-40 m_{7b}+14 m_{89}+66 m_{8a}-40 m_{9b},-40 m_{1b}-40 m_{3b}-40 m_{8a}+80 m_{ab},\nb\\ & -22 m_{11}+17 m_{12}-18 m_{13}-12 m_{14}+6 m_{15}-46 m_{17}+11 m_{18}+36 m_{1a}-48 m_{1b}+52 m_{24}\nb\\ &-12 m_{25}-8 m_{28}-52 m_{34}+12 m_{35}+9 m_{38}+104 m_{4a}+40 m_{4b}-24 m_{5a}-40 m_{6b}+26 m_{78}\nb\\ &+6 m_{89}-66 m_{8a},-4 m_{11}+17 m_{12}+40 m_{14}-6 m_{15}-46 m_{17}+2 m_{18}-48 m_{1b}+52 m_{24}\nb\\ &-12 m_{25}-8 m_{28}-20 m_{48}+40 m_{4b}-40 m_{6b}+20 m_{78},m_{12}+20 m_{14}-20 m_{19}-40 m_{1a}\nb\\ &+m_{23}+20 m_{24}+6 m_{27}+6 m_{29}-50 m_{2a}-20 m_{34}+20 m_{37}+40 m_{6a}-40 m_{7a}, \nb\\ &16 m_{7a}+16 m_{7b}+16 m_{9b},4 m_{14}+5 m_{15}-17 m_{17}+8 m_{19}+8 m_{1a}+16 m_{1b}-18 m_{24}\nb\\ &+9 m_{27}+9 m_{29}-18 m_{2a}+12 m_{34}-3 m_{35}-9 m_{37}-16 m_{3b}-80 m_{44}+32 m_{45}+8 m_{47}+4 m_{48}\nb\\ &+80 m_{49}+136 m_{4a}+64 m_{4b}+32 m_{57}-32 m_{59}-94 m_{5a}-20 m_{77}+6 m_{78}-20 m_{79}-90 m_{7a}\nb\\ &-56 m_{7b}-10 m_{89}+4 m_{8a}-56 m_{9b}-144 m_{ab},\nb\\ &4 m_{14}+5 m_{15}+8 m_{19}+16 m_{1a}+16 m_{1b}-9 m_{24}+8 m_{34}+m_{35}-40 m_{44}+6 m_{45}\nb\\ &-8 m_{46}-6 m_{47}+2 m_{48}+40 m_{49}+64 m_{4a}-8 m_{4b}+8 m_{56}+26 m_{57}-6 m_{59}-50 m_{5a}-16 m_{6a}\nb\\ &-16 m_{6b}-8 m_{89},-4 m_{14}+4 m_{15}-8 m_{17}+16 m_{1b}-9 m_{24}-14 m_{45}+8 m_{46}-46 m_{47}+2 m_{48}\nb\\ &-8 m_{4b}-8 m_{56}+32 m_{57}-16 m_{6b}+8 m_{78},\nb\\ &4 m_{14}-5 m_{17}-4 m_{19}+4 m_{34}-m_{37}-40 m_{44}+20 m_{47}+40 m_{49}+72 m_{4a}-8 m_{67}+8 m_{69}\nb\\ &-26 m_{77}-26 m_{79}+50 m_{7a}),\nb\\
\textbf{c}^{(5)}=&(-22 m_{11}+34 m_{12}-14 m_{13}-32 m_{14}+6 m_{15}-46 m_{17}+11 m_{18}+20 m_{19}+84 m_{1a}-8 m_{1b}\nb\\ &+84 m_{24}-24 m_{25}-6 m_{27}-16 m_{28}-6 m_{29}+52 m_{2a}-72 m_{34}+18 m_{35}+26 m_{37}+7 m_{38}\nb\\ &+88 m_{3b}+20 m_{48}+24 m_{4a}-80 m_{4b}-12 m_{5a}-40 m_{6a}-80 m_{6b}+26 m_{78}+132 m_{7a}+80 m_{7b}\nb\\ &-14 m_{89}-70 m_{8a}+80 m_{9b}+176 m_{ab},-22 m_{11}+34 m_{12}-14 m_{13}-32 m_{14}+6 m_{15}-46 m_{17}\nb\\ &+11 m_{18}+20 m_{19}+112 m_{1a}-6 m_{1b}+84 m_{24}-24 m_{25}-6 m_{27}-16 m_{28}-6 m_{29}+50 m_{2a}\nb\\ &-72 m_{34}+18 m_{35}+26 m_{37}+7 m_{38}+90 m_{3b}+20 m_{48}+208 m_{4a}+40 m_{4b}-48 m_{5a}\nb\\ &-40 m_{6a}-80 m_{6b}+26 m_{78}+40 m_{7a}+52 m_{7b}-14 m_{89}-82 m_{8a}+52 m_{9b}-100 m_{ab},\nb\\ &-22 m_{11}+34 m_{12}-14 m_{13}-32 m_{14}+6 m_{15}-46 m_{17}+11 m_{18}+20 m_{19}+68 m_{1a}-8 m_{1b}\nb\\ &+84 m_{24}-24 m_{25}-6 m_{27}-16 m_{28}-6 m_{29}+48 m_{2a}-72 m_{34}+18 m_{35}+26 m_{37}+7 m_{38}\nb\\ &+88 m_{3b}+20 m_{48}+184 m_{4a}+80 m_{4b}-36 m_{5a}-40 m_{6a}-80 m_{6b}+26 m_{78}-52 m_{7a}-14 m_{89}\nb\\ &-62 m_{8a}-176 m_{ab},-4 m_{14}-5 m_{15}+17 m_{17}-8 m_{19}+18 m_{24}-9 m_{27}-9 m_{29}+36 m_{2a}\nb\\ &-12 m_{34}+3 m_{35}+9 m_{37}+32 m_{3b}+80 m_{44}-32 m_{45}-8 m_{47}-4 m_{48}-80 m_{49}-136 m_{4a}\nb\\ &-144 m_{4b}-32 m_{57}+32 m_{59}+66 m_{5a}-16 m_{6a}-32 m_{6b}+20 m_{77}-6 m_{78}+20 m_{79}+182 m_{7a}\nb\\ &+136 m_{7b}+10 m_{89}-8 m_{8a}+136 m_{9b}+288 m_{ab},-4 m_{14}-5 m_{15}+17 m_{17}-8 m_{19}-18 m_{1b}\nb\\ &+18 m_{24}-9 m_{27}-9 m_{29}+18 m_{2a}-12 m_{34}+3 m_{35}+9 m_{37}+14 m_{3b}+80 m_{44}-32 m_{45}-8 m_{47}\nb\\ &-4 m_{48}-80 m_{49}-192 m_{4a}-104 m_{4b}-32 m_{57}+32 m_{59}+94 m_{5a}-16 m_{6a}-32 m_{6b}+20 m_{77}\nb\\ &-6 m_{78}+20 m_{79}+154 m_{7a}+68 m_{7b}+10 m_{89}-4 m_{8a}+68 m_{9b}+244 m_{ab},-22 m_{11}+34 m_{12}\nb\\ &-14 m_{13}-28 m_{14}-7 m_{15}-37 m_{17}+11 m_{18}+20 m_{19}+84 m_{1a}-8 m_{1b}+86 m_{24}-8 m_{25}\nb\\ &-15 m_{27}-16 m_{28}-15 m_{29}+52 m_{2a}-76 m_{34}+13 m_{35}+35 m_{37}+7 m_{38}+88 m_{3b}+80 m_{44}\nb\\ &-32 m_{45}-8 m_{47}+16 m_{48}-80 m_{49}+16 m_{4a}-96 m_{4b}-32 m_{57}+32 m_{59}+46 m_{5a}-40 m_{6a}\nb\\ &-80 m_{6b}+20 m_{77}+28 m_{78}+20 m_{79}+130 m_{7a}+136 m_{7b}-12 m_{89}-70 m_{8a}+136 m_{9b}+176 m_{ab})
\end{align}
For convenience, we let $k_{12}=-1/2, k_{14}=-1/2, k_{24}=1$.
According to these chosen syzygies, we can obtain the IBP relations,
\begin{align*}
&\int\frac{4 D_7 (-4 + d)}{D_1 D_2 \ldots D_6  }=\int\frac{1}{D_1 D_2 \ldots D_6  }\times\\
&\left(2 (d-3) m_{12}+(7 d-36) m_{14}+\frac{1}{2} (d-6) m_{15} \right.+(d-3) m_{16}+(d-7) m_{17}-(d-2) m_{19}-(d-3) m_{23} \\
&+(65-16 d) m_{24}+2 (d-3) m_{26} +4 (d-4) m_{27} +4 (d-4) m_{29}+(8 d-30) m_{34}+\frac{1}{2} (d+2) m_{35} \\
&+(d-3) m_{36}+(19-5 d) m_{37}-(d-3) m_{38}+9 (d-2) m_{45} -14 (d-2) m_{47}-4 (d-2) m_{49}\\
&+\frac{7}{2} (d-2) m_{57} +\frac{3}{2} (d-2) m_{59}-4 (d-3) m_{68}+2 (d-2) m_{69} -(d-7) m_{78}-2 (d-5) m_{89}-(d-5) m_1\\
&+(8 d-31) m_4 \left.+\frac{1}{2} (9 d-35) m_5+(d-5) m_6-m_{25}+5 m_{48} \right)\\
&\int\frac{D_9 (-3 + d)}{D_1 D_2 \ldots D_6  }=\int\frac{1}{D_1 D_2 \ldots D_6  }\times\\
&\left(-(d-4) m_1+(d-4) m_2+13 (d-4) m_4+2 (2 d-9) m_5+\frac{1}{2} (7 d-33) m_7-2 (d-4) m_8+2 (d-3) m_{11}\right.\\
&-3 (d-3) m_{12}+2 (d-3) m_{13}+4 (3 d-13) m_{14}+\left(d-\frac{13}{2}\right) m_{15}+\frac{1}{2} (7 d-33) m_{17}-(d-3) m_{18}\\
&-(d-3) m_{19}+(98-25 d) m_{24}-\frac{1}{2} (d-4) m_5 m_{24}+4 (d-3) m_{26}+\frac{1}{2} (13 d-51) m_{27}+2 (d-3) m_{28}\\
&+\frac{1}{2} (13 d-51) m_{29}+2 (6 d-23) m_{34}-\frac{1}{2} (13 d-51) m_{37}-2 (d-3) m_{38}+(d-1) m_5 m_{44}\\
&+\left(9 d-\frac{35}{2}\right) m_{45}-\frac{1}{2} (d-1) m_7 m_{45}+\frac{1}{2} (d-4) m_8 m_{45}-14 (d-2) m_{47}\\
&+d m_{48}-4 (d-2) m_{49}-\frac{1}{2} (d-1) m_4 m_{55}+\frac{3}{2} (d-2) m_{57}+\frac{7}{2} (d-2) m_{59}\\
&\left.-4 (d-3) m_{68}+2 (d-2) m_{69}-\frac{3}{2} (3 d-13) m_{78}-\frac{1}{2} (11 d-45) m_{89}+m_{25}+\frac{5 m_{35}}{2}\right)\\
& \int\frac{1/2 (4 - d)}{D_1 D_2 \ldots D_6  }=\int\frac{1}{D_1 D_2 \ldots D_6  }\times\\
&\left(-\frac{1}{2} (2 d+1) m_1+\frac{1}{2} (d-4) m_2+\frac{3}{2} (d-4) m_3+\left(12-\frac{5 d}{2}\right) m_4 \right.\\
&+\left(\frac{29}{2}-4 d\right) m_5+\left(\frac{41}{2}-6 d\right) m_7-\frac{5}{4} (d-4) m_8-\frac{3}{2} (d-3) m_9\\
&+\frac{3}{2} (d-3) m_{11}-3 (d-3) m_{12}+\frac{1}{2} (d-3) m_{13}-\frac{3}{2} (3 d-17) m_{14}\\
&+\frac{1}{2} (d-1) m_{15}+\left(7-\frac{3 d}{2}\right) m_{17}-\frac{3}{4} (d-3) m_{18}+3 (d-3) m_{19}+\left(9 d-\frac{75}{2}\right) m_{24}\\
&+\left(\frac{17}{2}-2 d\right) m_{27}+2 (d-3) m_{28}+\left(\frac{17}{2}-2 d\right) m_{29}-\frac{1}{2} (7 d-27) m_{34}\\
&-\frac{3}{2} (d-2) m_{35}+\left(2 d-\frac{17}{2}\right) m_{37}-\frac{1}{4} (d-3) m_{38}-7 (d-2) m_{45}+10 (d-2) m_{47}\\
&+\frac{1}{2} (d-12) m_{48}+4 (d-2) m_{49}-2 (d-2) m_{57}-(d-2) m_{59}-2 (d-2) m_{69}+\frac{1}{4} (3 d-19) m_{78}\\
&\left.-\frac{1}{4} (3 d+1) m_{89}+\frac{m_6}{2}+\frac{m_{25}}{2}\right).
\end{align*}
According to former IBP relations and IBP relations of diagrams with less propagators, the result can be further reduced as following:
\begin{align*}
\int\frac{D_7}{D_1 D_2 \ldots D_6  }&=\frac{(d-4 )}{(d-3 ) } I_5^b-2I_5^a+ \frac{6d-20 }{d-4} I_4^a - \frac{(9d-30)(3 d-8 ) }{2 (d-4 )^2 }I_3^b + \frac{(6d-20 ) (3 d-8)}{(d-4)^2}I_3^a\\
\int\frac{D_9}{D_1 D_2 \ldots D_6  }&=\frac{3 (3 d-10) (3 d-8)}{(d-4)^2 }I_3^b-\frac{5 (3 d-10) (3 d-8) }{(d-4)^2}I_3^a+\frac{2 (3 d-10) (3 d-8) }{(d-4)^2}I_3^c\\
&-\frac{6 (d-4)}{(d-3)}I_5^b-\frac{2 (3 d-10) }{(d-4)}I_4^a+4I_5^a\\
\int\frac{1}{D_1 D_2 \ldots D_6  }&=-\frac{6 (d-3)}{(d-4) }I_5^a-6I_5^b+\frac{(9 d-30) (d-3)}{(d-4)^2 }I_4^a\\
&-\frac{3 (3 d-10) (3 d-8) (d-3)}{(d-4)^3 }I_3^a+\frac{3 (3 d-10) (3 d-8) (d-3)}{(d-4)^3 }I_3^c,
\end{align*}
where 
\begin{eqnarray*}
I_5^a&=&\int {1\over D_1 D_2 D_3 D_5 D_6},  I_5^b=\int {1\over D_2 D_3 D_4 D_5 D_6},
I_4^a=\int {1\over D_1 D_3 D_5 D_6}, \\
 I_3^a&=&\int {1\over D_3 D_4 D_6},~~~~~~ ~~I_3^b=\int {1\over D_3 D_5 D_6}, ~~~~~~~I_3^c=\int {1\over D_2 D_5 D_6}.
\end{eqnarray*}
This results are same as the output from FIRE5. Other higher order integrands can also be reduced under these IBP relations. Our program spent 125 seconds to reduce each integral into the irreducible basis while FIRE5 spent 76 seconds.

\section{Conclusion And Outlook}\label{sec3}
In this paper, we have shown a new effective method on calculating syzygies. With a significant feedback in each step, our method can safely protect the \emph{irreducibility} of our basis, which can not be given by other methods in current study. Evidence is given in the paper to show that this algorithm is effective for general ideals, and hints are obtained on the generalizations to modules (although currently we cannot guarantee the full rigorous irreducibility of syzygies for modules). Through this effective method, the idea that mathematical structures of scattering amplitudes are featured with different syzygies can come to practical use via some simple steps of programming.

As a result, physics can be read off through this useful method. As an application, this paper mainly describes s specific context, namely, the IBP relations of a specific two-loop diagram in a Yang-Mills field theory. However, via the general illustrations on the method given before, one can easily simplify all possible IBP relations for a general diagram. Thus, we gain the general and systematic approach on how to simplify the IBP relations and determine irreducible integrals.

This method is fundamental enough that can be generalized to a wide range of applications in the study of scattering amplitudes. First, one can use this method to investigate some other theories and investigate the irreducible integrals of diagrams. Second, tree-level amplitudes or the integrands of loop-level amplitudes can be taken as generators for an ideal, so all possible irreducible relations beyond degree-zero (KK relations) and degree-one (BCJ relations) may be found. One can also use this method to explore simplifications of Grassmannian integral form, and construction of loop amplitudes from unitarity cuts. Furthermore, this algorithm and its ideas, can be used in all possible areas in mathematics and physics that need to simplify complicated algebraic relations (namely, syzygies) of several polynomial (rational) functions. We leave them to future works.

\section*{Acknowledgments}
GC, RX and HZ thank  K. Larsen and Y. Zhang for useful comments and kind suggestions. Useful discussions with Y. E. Cheung, Y. Wang and S. Zhou are gratefully acknowledged. GC, RX and HZ have been supported by the Fundamental Research Funds for the Central Universities under contract 020414340080, NSF of China Grant under contract 11405084, the Open Project Program of State Key Laboratory of Theoretical Physics, Institute of Theoretical Physics, Chinese Academy of Sciences, China (No.Y5KF171CJ1) and the Jiangsu Ministry of Science and Technology under contract BK20131264. JL and YZ will thank the department of physics in Nanjing University for hosting. We also thank Y. Gao, T. Han for hospitality and Key Laboratory of Theoretical Physics for hosting. We thank Tianheng Wang for lots of help in revising our manuscript.


\begin{thebibliography}{999}
\bibitem{Britto:2005fq}
  R.~Britto, F.~Cachazo, B.~Feng and E.~Witten,
  Phys.\ Rev.\ Lett.\  {\bf 94}, 181602 (2005)
  [hep-th/0501052].

\bibitem{ArkaniHamed:2012nw}
  N.~Arkani-Hamed, J.~L.~Bourjaily, F.~Cachazo, A.~B.~Goncharov, A.~Postnikov and J.~Trnka,
  arXiv:1212.5605 [hep-th].
\bibitem{math3}
Bayer, Dave, and David Mumford. What can be computed in algebraic geometry. Computational algebraic geometry and commutative algebra. Vol. 34. Symposia Matematica, 1993.
\bibitem{math2}
W. Adams and P. Loustaunau, An Introduction to Gr\"{o}bner Bases. American Mathematical Society, Graduate Studies in Mathematics, Volume 3, 1994.

\bibitem{math5}
Postnikov, Alexander, and Boris Shapiro. Trees, parking functions, syzygies, and deformations of monomial ideals. Transactions of the American Mathematical Society 356.8 (2004): 3109-3142.
\bibitem{math10}
Buchberger B. Ein Algorithmus zum Auffinden der Basiselemente des Restklassenringes nach einem nulldimensionalen Polynomideal. PhD thesis, Innsbruck, 1965

\bibitem{math11}
Buchberger B. Gr\"{o}bner Bases : an Algorithmic Method in Polynomial Ideal Theory. In Recent trends in multidimensional system theory, Reidel, Ed. Bose, 1985.

\bibitem{Gluza:2010ws}
  J.~Gluza, K.~Kajda and D.~A.~Kosower,
  Phys.\ Rev.\ D {\bf 83}, 045012 (2011)
  [arXiv:1009.0472 [hep-th]].

\bibitem{Schabinger:2011dz}
  R.~M.~Schabinger,
  JHEP {\bf 1201}, 077 (2012)
  [arXiv:1111.4220 [hep-ph]].
\bibitem{Zhang:2014xwa}
  Y.~Zhang,
  arXiv:1408.4004 [hep-th].

\bibitem{Grozin:2011mt}
  A.~G.~Grozin,
  Int.\ J.\ Mod.\ Phys.\ A {\bf 26}, 2807 (2011)
  [arXiv:1104.3993 [hep-ph]].

\bibitem{Kosower:2011ty}
  D.~A.~Kosower and K.~J.~Larsen,
  Phys.\ Rev.\ D {\bf 85}, 045017 (2012)
  [arXiv:1108.1180 [hep-th]].
\bibitem{Ita:2015tya} 
  H.~Ita,
  arXiv:1510.05626 [hep-th].
  
\bibitem{Larsen:2015ped} 
  K.~J.~Larsen and Y.~Zhang,
  arXiv:1511.01071 [hep-th].

\bibitem{Chen:2014ara}
  B.~Chen, G.~Chen, Y.~K.~E.~Cheung, Y.~Li, R.~Xie and Y.~Xin,
  arXiv:1411.3889 [hep-th].

\bibitem{Arkani-Hamed:2014bca}
  N.~Arkani-Hamed, J.~L.~Bourjaily, F.~Cachazo, A.~Postnikov and J.~Trnka,
  JHEP {\bf 1506}, 179 (2015)
  [arXiv:1412.8475 [hep-th]].

\bibitem{Franco:2015rma}
  S.~Franco, D.~Galloni, B.~Penante and C.~Wen,
  JHEP {\bf 1506}, 199 (2015)
  [arXiv:1502.02034 [hep-th]].


\bibitem{Chen:2015bnt}
  B.~Chen, G.~Chen, Y.~K.~E.~Cheung, R.~Xie and Y.~Xin,
  arXiv:1507.03214 [hep-th].

\bibitem{Benincasa:2015zna}
  P.~Benincasa,
  arXiv:1510.03642 [hep-th].

\bibitem{Frassek:2015rka}
  R.~Frassek, D.~Meidinger, D.~Nandan and M.~Wilhelm,
  arXiv:1506.08192 [hep-th].

\bibitem{Bern:2014kca}
  Z.~Bern, E.~Herrmann, S.~Litsey, J.~Stankowicz and J.~Trnka,
  JHEP {\bf 1506}, 202 (2015)
  [arXiv:1412.8584 [hep-th]].


\bibitem{Bern:1994zx}
  Z.~Bern, L.~J.~Dixon, D.~C.~Dunbar and D.~A.~Kosower,
  Nucl.\ Phys.\ B {\bf 425}, 217 (1994)
  [hep-ph/9403226].

\bibitem{Du:2014jwa}
  P.~Du, G.~Chen and Y.~K.~E.~Cheung,
  JHEP {\bf 1409}, 115 (2014)
  [arXiv:1401.6610 [hep-th]].

\bibitem{Kleiss:1988ne}
  R.~Kleiss and H.~Kuijf,
  Nucl.\ Phys.\ B {\bf 312}, 616 (1989).

\bibitem{Bern:2008qj}
  Z.~Bern, J.~J.~M.~Carrasco and H.~Johansson,
  Phys.\ Rev.\ D {\bf 78}, 085011 (2008)
  [arXiv:0805.3993 [hep-ph]].


  \bibitem{math4}
Sun, Yao. Signature-Based Gr\"{o}bner Basis Algorithms---Extended MMM Algorithm for computing Gr\"{o}bner bases." arXiv:1308.2371 (2013).







\bibitem{math6}
Faugere, Jean-Charles. A new efficient algorithm for computing Gr\"{o}bner basis without reduction to zero (F5)(15/6/2004).

\bibitem{math7}
Eder, Christian, and John Perry. F5C: a variant of Faugere¡¯s F5 algorithm with reduced Gr\"{o}bner bases. Journal of Symbolic Computation 45.12 (2010): 1442-1458.

\bibitem{math8}
Roune, Bjarke Hammersholt, and Michael Stillman. Practical Gr\"{o}bner basis computation. Proceedings of the 37th International Symposium on Symbolic and Algebraic Computation. ACM, 2012.

\bibitem{math9}
Gr\"{a}be, Hans-Gert. Minimal primary decomposition and factorized Gr\"{o}bner basis. Applicable Algebra in Engineering, Communication and Computing 8.4 (1997): 265-278.

\bibitem{math91}
Hodges, Tim. Computing Syzygies of Homogeneous Polynomials using Linear Algebra. Diss. Colorado State University, 2014.

\bibitem{Singular}
W. Decker et al,
\textit{{\sc Singular}, A computer algebra system for polynomial computations} (http://www.singular.uni-kl.de, 2015).

\bibitem{math13}
M\"{o}ller, H. Michael, Teo Mora, and Carlo Traverso. Gr\"{o}bner bases computation using syzygies. Papers from the international symposium on Symbolic and algebraic computation. ACM, 1992.

\bibitem{math12}
Eisenbud, David. The geometry of syzygies. Graduate Texts in Mathematics (2003).





\bibitem{CaronHuot:2012ab}
  S.~Caron-Huot and K.~J.~Larsen,
  JHEP {\bf 1210}, 026 (2012)
  [arXiv:1205.0801 [hep-ph]].

\bibitem{IBP1}
P. A. Baikov, Nucl. Instrum. Meth.  {\bf A 389}, (1997) 347
  [arXiv: hep-ph/9611449].

\bibitem{IBP2}
P. A. Baikov, Phys. Lett.  {\bf B 389}, (1996) 347
  [arXiv: hep-ph/9603267].

\bibitem{IBP3}
A. G. Grozin,  [arXiv: 1104.3993 [hep-ph]].









\bibitem{Zhang:2012ce}
  Y.~Zhang,
  JHEP {\bf 1209}, 042 (2012)
  [arXiv:1205.5707 [hep-ph]].

\bibitem{Ossola:2007ax}
  G.~Ossola, C.~G.~Papadopoulos and R.~Pittau,
  JHEP {\bf 0803}, 042 (2008)
  [arXiv:0711.3596 [hep-ph]].

\bibitem{Ellis:2007br}
  R.~K.~Ellis, W.~T.~Giele and Z.~Kunszt,
  JHEP {\bf 0803}, 003 (2008)
  [arXiv:0708.2398 [hep-ph]].

\bibitem{Mastrolia:2010nb}
  P.~Mastrolia, G.~Ossola, T.~Reiter and F.~Tramontano,
  JHEP {\bf 1008}, 080 (2010)
  [arXiv:1006.0710 [hep-ph]].

\bibitem{Badger:2010nx}
  S.~Badger, B.~Biedermann and P.~Uwer,
  Comput.\ Phys.\ Commun.\  {\bf 182}, 1674 (2011)
  [arXiv:1011.2900 [hep-ph]].

\bibitem{Ossola:2006us}
  G.~Ossola, C.~G.~Papadopoulos and R.~Pittau,
  Nucl.\ Phys.\ B {\bf 763}, 147 (2007)
  [hep-ph/0609007].

\bibitem{Hirschi:2011pa}
  V.~Hirschi, R.~Frederix, S.~Frixione, M.~V.~Garzelli, F.~Maltoni and R.~Pittau,
  JHEP {\bf 1105}, 044 (2011)
  [arXiv:1103.0621 [hep-ph]].

\bibitem{Bevilacqua:2011xh}
  G.~Bevilacqua, M.~Czakon, M.~V.~Garzelli, A.~van Hameren, A.~Kardos, C.~G.~Papadopoulos, R.~Pittau and M.~Worek,
  Comput.\ Phys.\ Commun.\  {\bf 184}, 986 (2013)
  [arXiv:1110.1499 [hep-ph]].

\bibitem{Giele:2008ve}
  W.~T.~Giele, Z.~Kunszt and K.~Melnikov,
  JHEP {\bf 0804}, 049 (2008)
  [arXiv:0801.2237 [hep-ph]].

\bibitem{Ellis:2008ir}
  R.~K.~Ellis, W.~T.~Giele, Z.~Kunszt and K.~Melnikov,
  Nucl.\ Phys.\ B {\bf 822}, 270 (2009)
  [arXiv:0806.3467 [hep-ph]].



\bibitem{Mastrolia:2012bu}
  P.~Mastrolia, E.~Mirabella and T.~Peraro,
  JHEP {\bf 1206}, 095 (2012)
  [JHEP {\bf 1211}, 128 (2012)]
  [arXiv:1203.0291 [hep-ph]].

\bibitem{Smirnov:2014hma} 
  A.~V.~Smirnov,
  Comput.\ Phys.\ Commun.\  {\bf 189}, 182 (2015)
  doi:10.1016/j.cpc.2014.11.024
  [arXiv:1408.2372 [hep-ph]].










\end{thebibliography}
\end{document}